\documentclass[letterpaper,twocolumn,10pt]{article}
\usepackage{usenix}

\usepackage{cite}
\usepackage{amsmath,amssymb,amsfonts,amsthm}
\usepackage{algorithm}
\usepackage{algpseudocode}
\usepackage{graphicx}
\usepackage{textcomp}
\usepackage{xcolor}
\usepackage{enumitem}
\usepackage{url}
\usepackage{multirow}
\usepackage{float}
\usepackage[font=footnotesize]{caption}
\usepackage{subcaption}

\newtheorem{lemma}{Lemma}
\newtheorem{theorem}{Theorem}
\newtheorem{proposition}{Proposition}

\theoremstyle{definition}
\newtheorem{definition}{Definition}

\usepackage{etoolbox}

\def\BibTeX{{\rm B\kern-.05em{\sc i\kern-.025em b}\kern-.08em
    T\kern-.1667em\lower.7ex\hbox{E}\kern-.125emX}}
\begin{document}

\title{
\textcolor{red}{This is an early publication of a work in progress.}\\
Wrangling Entropy: Next-Generation Multi-Factor Key Derivation, Credential Hashing, and Credential Generation Functions}

% \title{Lessons in Entropy: Next-Generation Provable Multi-Factor Key Derivation and Credential Hashing Functions (MFKDF2 and MFCHF2)}
% Lessons on Entropy: Provably Secure Constructions for Multi-Factor Key Derivation, Credential Hashing, and Password Generation

\author{
    Colin Roberts\\
    \textit{Multifactor}
    \\colin@multifactor.com
\and
    Vivek Nair\\
    \textit{Multifactor}
    \\vivek@multifactor.com
\and
    Dawn Song\\
    \textit{UC Berkeley}
    \\dawnsong@berkeley.edu
}

\maketitle

\begin{abstract}
The Multi-Factor Key Derivation Function (MFKDF) offered a novel solution to the classic problem of usable client-side key management by incorporating multiple popular authentication factors into a key derivation process, but was later shown to be vulnerable to cryptanalysis that degraded its security over multiple invocations.
In this paper, we present the Entropy State Transition Modeling Framework (ESTMF), a novel cryptanalytic technique designed to reveal pernicious leaks of entropy across multiple invocations of a cryptographic key derivation or hash function, and show that it can be used to correctly identify each of the known vulnerabilities in the original MFKDF construction.
We then use these findings to propose a new construction for ``MFKDF2,'' a next-generation multi-factor key derivation function that can be proven to be end-to-end secure using the ESTMF. 
Finally, we discuss how MFKDF2 can be extended to support more authentication factors and usability features than the previous MFKDF construction, and derive several generalizable best-practices for the construction of new KDFs in the future.

\end{abstract}

\section{Introduction}
\label{sec:introduction}

Since the publication of ``Why Johnny Can't Encrypt'' in USENIX Sec '99 \cite{10.5555/1251421.1251435}, user-friendly client-side key management has remained one of the most difficult and widespread unsolved problems in the field of usable security.
Absent trusted hardware-based mechanisms for key management, many widely-used operating systems \cite{ios, windows}, software applications \cite{lastpass_arch, dashlane_arch}, and networking protocols \cite{wpa, wpa2} still turn to password-based key derivation functions (PBKDFs) like PBKDF2 \cite{rfc2898} or Argon2 \cite{argon2} for key management, limiting the security of derived keys to that of the passwords they are based on.
To address this deficiency, Nair and Song (USENIX Sec '23) \cite{NairSong2023} introduce the concept of a
Multi-Factor Key Derivation Function (MFKDF), a client-side KDF that incorporates other popular authentication factors like TOTP,
HOTP, and hardware tokens in the key derivation process.

While proven secure in the static, single-invocation threat model traditionally used to evaluate PBKDFs, MFKDF was shown by Scarlata, Backendal, and Haller (USENIX Sec '24) \cite{Scarlata2024} to leak entropy over multiple invocations.
At the core of this discrepancy is the fact that unlike stateless KDFs primitives (e.g., PBKDFs), MFKDF was the first \textit{inherently stateful} KDF, introducing new classes of multi-invocation vulnerabilities that made existing proof techniques inadequate.

This paper bridges the gap between static and dynamic security for stateful KDFs by introducing the Entropy State Transition Modeling Framework (ESTMF), a novel cryptanalytic technique for analyzing the dynamic, multi-invocation security of stateful cryptographic schemes. We show that ESTMF can be used to identify all known deficiencies within the original Nair and Song MFKDF construction (hereinafter NS23), and use these insights to propose MFKDF2, a new construction that is provably secure within the ESTMF model. MFKDF2 retains all of the features of NS23---including policy-based enforcement and factor recovery---while remedying its vulnerabilities and extending its capabilities with support for modern authentication factors like passkeys.

In this paper, we make the following contributions:
\begin{enumerate}
    \item We introduce the Entropy State Transition Modeling Framework (ESTMF), a new technique for analyzing the dynamic, multi-invocation security of stateful KDFs by finding all possible channels for entropy leakage (\S\ref{sec:estmf}).
    
    \item Using the ESTMF, we formally categorize all previously known attacks on NS23, and identify a new ``factor fungibility'' vulnerability related to factor ordering (\S\ref{sec:analysis}).
    
    \item We propose MFKDF2, a new MFKDF construction that achieves provable security even within the ESTMF model (\S\ref{sec:mfkdf2}). MFKDF2 extends NS23's MFKDF with support for new authentication factors like passkeys (\S\ref{sec:factors}) and new usability features (\S\ref{sec:features}), and can also be used for multi-factor credential hashing and generation (\S\ref{sec:modes}).

    \item We use the ESTMF to derive generalizable best practices for the secure construction and use of KDFs (\S\ref{sec:discussion}).
    
    % \item We expand the MFKDF paradigm with support for new, modern factors, including Passkeys (WebAuthn-PRF), noise-tolerant factors based on fuzzy encryption, and mobile-centric factors.
    
    % \item We introduce a mechanism for cryptographic agility, allowing users to securely upgrade the computational cost of the KDF over time to adapt to increases in attacker capabilities.
    
    % \item We provide a key encapsulation mode, enabling the migration of existing secrets, such as cryptocurrency seed phrases, to be protected under an MFKDF2 policy.
    % (NS23 had this)
\end{enumerate}

\section{Background \& Related Work}
\label{sec:background}

Modern security architectures for applications like password managers \cite{1password_arch, lastpass_arch} and disk encryption software \cite{windows, ios} rely on password-based key derivation functions (PBKDFs) to solve the classical problem of usable key management \cite{10.5555/1251421.1251435} by deterministically transforming user passwords into cryptographic keys to provide end-to-end encryption.
Although PBKDFs can be made ``memory hard'' \cite{argon2, cryptoeprint:2016/027} to resist brute-force attacks, systems relying on this technique are still exposed to the general weaknesses of passwords as a single authentication factor, including their relatively low complexity \cite{florencio_large_2006}, frequent reuse \cite{password_reuse}, and vulnerability to password spraying and credential stuffing attacks \cite{credential_stuffing}.
While adding secondary authentication factors like HMAC-based One-Time Password (HOTP) \cite{rfc4226}, Time-based One-Time Password (TOTP) \cite{rfc6238}, or Out-Of-Band Authentication (OOBA) \cite{ooba} to such applications is now standard practice, doing so does nothing to further protect the confidentiality of user data in the event of a server compromise if password-based keys are still used as the underlying authentication or encryption mechanism.

\subsection{Multi-Factor Key Derivation}
\label{subsec:multi-factor_key_derivation}
The original Multi-Factor Key Derivation Function (MFKDF), as described by Nair and Song at USENIX Sec '23 (hereinafter NS23) \cite{NairSong2023}, was designed to solve this problem by incorporating all of a user's existing authentication factors (without modification), including HOTP, TOTP, and OOBA factors, directly into the client-side key derivation process. The construction was built on three core concepts:

\begin{enumerate}
    \item \textbf{Public Parameters:} A public state, initialized by a ``setup'' function, intended to be stored on an untrusted server but used for client-side key derivation. While similar to a ``salt'' in conventional PBKDFs, this value is also used to support dynamic factors where a new ``witness'' (e.g., a new TOTP code) is used for each derivation.

    \item \textbf{Factor Constructions:} A set of algorithms for converting witnesses from various authentication methods (including dynamic factors like TOTP) into static secret material suitable for deterministic key derivation.
    
    \item \textbf{MFKDF Construction:} The core MFKDF algorithm that invokes factor constructions to convert factor witnesses into static key material, combines this material, then invokes a memory-hard KDF to derive a static key, updating the public parameters in the process.
\end{enumerate}

NS23 then extends this basic approach to provide a ``threshold MFKDF'' construction, in which any $k$ of $n$ established authentication factors can be used to derive a key, allowing for client-side key recovery and reconstitution if some factors are lost, as well as a ``policy-based MFKDF'' construction that precisely enforces allowable combinations of factors.

In addition to providing candidate algorithms for standard, threshold, and policy-based MFKDF, and various factor constructions, NS23 established (and formally defined) three key security properties, which can be informally summarized as:

\begin{enumerate}
    \item \textbf{Correctness:} Providing a valid set of factor witnesses always derives the same ``correct'' key.

    \item \textbf{Safety:} Providing an invalid set of factor witnesses will almost certainly not derive the correct key.
    
    \item \textbf{Entropy:} Attackers should not be able to brute-force attack individual factors, instead having to search the entire space of all possible factor witness combinations, thereby providing an expontential security improvement.
\end{enumerate}

NS23 provides compelling proof that its suggested candidates satisfy their claimed security properties in the same threat model traditionally used to evaluate PBKDFs (i.e., across a single invocation). Importantly, they do not demonstrate the security of their scheme across multiple invocations.

\subsection{Cryptanalysis of MFKDF}
\label{subsec:cryptanalysis}
This static single-invocation security model used by NS23 to validate the properties of MFKDF, while adequate for standard PBKDFs, does not sufficiently account for an adversary who can observe and manipulate the KDF's public parameters over multiple invocations. As such, shortly after the publication of NS23, Scarlata, Backendal, and Haller published a red-team cryptanalysis of MFKDF in USENIX Sec '24 (hereinafter SBH24) \cite{Scarlata2024} which took advantage of these dynamic public parameters to several identify vulnerabilities in the construction that manfiest across multiple invocations.
The vulnerabilities identified in SBH24 are not disparate implementation bugs; rather, they are distinct classes of entropy leaks that cause the protocol to fail its stated security goals in a dynamic setting. We summarize the concerns below.

\subsubsection{Cryptographic Primitive Misuse}
This class of vulnerabilities stems from using cryptographic primitives in ways that are secure for one-time use but fail under repeated exposure or when plaintext is known.

\begin{itemize}
    \item \textbf{Key Recovery from Known Plaintext:} The ``HOTP Compromise Attack'' demonstrated that an attacker who compromises a factor's secret can combine it with the public ciphertext to recover the MFKDF master key.
    \item \textbf{Two-Time Pad Vulnerability:} The ``Share Recovery Attack'' showed that re-encrypting a static share with a new pad creates a classic two-time pad, allowing an attacker to recover the pads for low-entropy factors.
\end{itemize}

\subsubsection{State Management Failures}
This class of vulnerabilities arises from the protocol's failure to protect the integrity and forward secrecy of its own public state. An adversary who can manipulate the state can trick an honest client into making insecure decisions.
\begin{itemize}
    \item \textbf{Integrity Failure:} The ``Share Dilution Attack'' showed that an attacker can modify the policy in the public state, causing a user to weaken their own security.
    \item \textbf{Forward Secrecy Failure:} The ``Share Recovery Attack'' revealed that failing to regenerate shares during a \texttt{Recover} operation allows an old, compromised factor to remain a valid pathway for recovering a share.
\end{itemize}

\subsubsection{Side-Channel and Bias Leakage}
This class represents subtle form of entropy leakage, where information is extracted from biases or non-uniformities in the underlying cryptographic components. The ``Share Format Attack'' is a prime example, where the use of Shamir's Secret Sharing over a large prime field results in a non-uniform byte representation of the shares. This bias creates a perfect distinguisher, allowing an attacker to test factor guesses in isolation. A more subtle example is the ``HOTP Bias Attack,'' showing how a small statistical bias can be amplified over many observations to recover a factor's source key material.

\subsection{MFKDF Derivatives \& Impact}
\label{sec:impact}

Despite ultimately not living up to its stated security goals, the NS23 paper was initially well received by the academic community, being described as ``a significant advancement in cryptographic techniques... enabling the cryptographic enforcement of highly specific key generation policies'' \cite{HOANG2025104179} and winning the USENIX Sec '23 Distinguished Artifact Award \cite{usenixUSENIXSecurity}.
In a subsequent work, Nair and Song also showed that a well-formed MFKDF construction can form the basis of a secure ``Multi-Factor Credential Hashing Function'' (MFCHF) \cite{10190544}, allowing servers to simultaneously verify multiple standard user authentication factors while making server-side credential hashes $10^6$ to $10^{48}$ times harder for adversaries to crack without added latency or usability impact. They further showed that the same primitive can be used to construct a ``Multi-Factor Deterministic Password Generator`` (MFDPG) \cite{nair2023mfdpgmultifactorauthenticatedpassword}, enabling the creation of password managers with zero cloud or local storage, and essentially allowing users to upgrade password-only websites to MFA on the client side.

Beyond these applications, MFKDF has also been studied in the context of building more secure and user-friendly cryptocurrency wallets \cite{10174998, 10805375} that look and feel like centralized custodial wallets while in fact being trustless and decentralized. Upon evaluation, a large-scale user study published in ACM CHI '24 concluded that ``MFKDF-based applications outperform conventional key management approaches on both subjective and objective metrics, with a 37\% higher average SUS score (p < 0.0001) and 71\% faster task completion times (p < 0.0001) for the MFKDF-based wallet'' \cite{10.1145/3613904.3642464}.

Ultimately, the related publications demonstrate that a truly secure multi-factor key derivation construction would significantly improve the usable security of systems ranging from password managers to cryptocurrency wallets and uses cases ranging from client-side key derivation to server-side credential verification, motivating our work in this paper to design an ``MFKDF2'' that actually accomplishes the goals of NS23.

\subsection{Related Work}
\label{sec:related}

While NS23 remains the only serious attempt at constructing a client-side key derivation function with support for commonly used authentication factors like HOTP and TOTP, usable key management remains a highly-studied field \cite{hakimsecure, ide2025personhood} with many competing proposals \cite{10724844, cryptoeprint:2023/1785, 10.1007/978-3-031-91101-9_14}. Among these are proposals leveraging password-based key derivation \cite{argon2, cryptoeprint:2016/027}, biometric-based key derivation \cite{biokey, uzun_cryptographic_2021, seo_construction_2018, Soutar1998BiometricEU}, authenticated key exchange \cite{pointcheval_multi-factor_2008, liu_multi-factor_2011, chen_modular_2014, 6920371}, or secret sharing a key amongst several user devices or \cite{dalskov_2fe_2020} a trusted committee \cite{cryptoeprint:2016/144, cryptoeprint:2021/1522}.

Many key management solutions include trusted hardware or trusted committee assumptions, or focus on establishing a secure communication channel between two parties rather than deriving a deterministic encryption key.
Therefore, the design of an improved ``MFKDF2'' algorithm with the same setting and security goals of NS23 is essential to guarantee a drop-in replacement for PBKDFs and thereby achieve the myraid of applications and benefits described in \S\ref{sec:impact}.
\section{Entropy State Transition Modeling}
\label{sec:estmf}

The fundamental problem this paper addresses is the lack of a security model that can formally reason about a stateful KDF as a dynamic system; the vulnerabilities identified by SBH24 are all symptoms of this underlying issue in that they represent channels through which sensitive bits of entropy can either be revealed to an adversary over time, or be maliciously influenced by an adversary over time.

% To construct a provably secure MFKDF2,
To truly reason about the security properties of a stateful KDF,
we must therefore analyze the system as a state machine and consider all possible channels where entropy leaks may occur. This includes not only the flow of information from the derived secret to the public state, but also the flow between input factors, the potential for an input's own public state to leak its secret, and the ability of an adversary to inject malicious data into the protocol's logic. This section introduces the
% \emph{Entropy State Transition Modeling Framework (ESTMF)} 
Entropy State Transition Modeling Framework (ESTMF), 
which provides the lens to model these flows
and ultimately provide a formal solution to this problem and prove that the MFKDF2 construction successfully seals them.

% The security analysis in NS23 successfully established the scheme's strength in a static, single-invocation model. However, subsequent cryptanalysis revealed that this static security model was insufficient, as it did not account for the protocol's evolution over time. The core missing component was a formal method for analyzing the \emph{dynamic security properties} of the scheme---a way to reason about the flow of secret information as the public state is repeatedly updated.

% To bridge this gap, we introduce the \textbf{Entropy State Transition Modeling Framework (ESTMF)}, a novel technique for analyzing the security of stateful key derivation and hashing functions across their entire lifecycle. ESTMF provides the missing dynamic study of the security model by treating a cryptographic scheme as a state machine and modeling every potential channel for information leakage. In the case of MFKDF, these leaks can occur between any two components of the system: from the master secret to the public state, between the secret materials of different factors, or from a factor's own secret to its public parameters. Furthermore, an adversary may attempt to induce a leak by injecting malicious data into the protocol's logic.

The ESTMF is built upon four principles that must be satisfied to prevent entropy leaks. We first define the general capabilities of an adversary interacting with any stateful KDF:

\begin{definition}
\label{def:adversarial_game}
Let an adversary $\mathcal{A}$ interact with a challenger $\mathcal{C}$ who manages a stateful cryptographic scheme. The \emph{adversarial interaction game} $\mathcal{G}$ proceeds as follows:
\begin{enumerate}
    \item Setup: The challenger initializes the scheme, resulting in an initial public state $\mathcal{B}_0$.
    \item Interaction: The adversary $\mathcal{A}$ receives $\mathcal{B}$ and may adaptively issue a polynomial number of queries $q$ corresponding to the scheme's public functions (e.g., \texttt{Derive}, \texttt{Recover}) with read and write access to the public state $\mathcal{B}$. For each query, the challenger computes the resulting state $\mathcal{B}'$ and returns it to the adversary providing a stream of public states $(\mathcal{B}_0, \mathcal{B}_1, \dots, \mathcal{B}_q)$.
\end{enumerate}
\end{definition}

This active adversary was a key missing component in NS23. While the paper argued that MFKDF state could be stored publicly, it is not clear to see this without considering an adversary who can observe and manipulate public state.  

\subsection{Master Secret Indistinguishability}
The highest level property within ESTMF is to ensure that there is zero entropy flows from the derivation process and master secret $M$ to the entire public state stream. A scheme that satisfies this is one whose public footprint, no matter how it evolves, reveals nothing about the core secret it protects.
% We formalize this below.

\begin{definition}
\label{def:master_secret_indistinguishability}
A stateful KDF scheme has the property of \emph{Master Secret Indistinguishability (MSI)} if for any PPT adversary $\mathcal{A}$, its advantage in the following game is negligible. The \emph{MSI Game} proceeds as follows:
\begin{enumerate}
    \item Setup: The challenger generates two independent, uniformly random master secrets, $M_A, M_B \in \{0,1\}^\lambda$.
    \item Challenge: The challenger flips a random bit $b \in \{0, 1\}$. If $b=0$, the MFKDF instance will use $M_A$ and if $b=1$, it will use $M_B$.
    \item Interaction: The adversary plays the Adversarial Interaction Game $\mathcal{G}$ with the challenger, where the scheme instance is initialized using the previously master secret.
    \item Guess: Finally, the adversary $\mathcal{A}$ outputs a bit $b'$. The adversary wins if $b' = b$.
\end{enumerate}
The advantage of the adversary is defined as $\text{Adv}_{\mathcal{A}}^{\text{MSI}}(\lambda) = |\Pr[b' = b] - 1/2|$.
\end{definition}

For intuition, consider non-example of a secure block cipher $E$ operating in CTR mode on arbitrary plaintext but with respect to the MSI threat model. Here, an adversary may read the current counter state $j$, and force the challenger to take $j$ as the next counter, thereby reducing the security of $E$ to ECB mode. Hence, $E$-CTR is not an MSI secure scheme.

A catastrophic failure to meet this property was seen in the ``HOTP Compromise Attack,'' where a flaw in a single factor's public state component allowed an attacker with a compromised HOTP key to recover the final MFKDF key, and by extension, the master secret.

Crucially, MSI is not an independent property but rather the logical culmination of the other principles. Failures for a scheme to satisfy subsequent Defs. \ref{def:factor_independence_game}, \ref{def:leak_free_factor}, and \ref{def:state_integrity} creates an attack vector that an adversary can use to win the MSI game. Therefore, by positing MSI as the required security goal for a stateful KDF, the ESTMF forces a designer to systematically identify and seal all potential channels for entropy leakage.

\subsection{Entropy Flow Between Factors}
\label{subsec:entropy_flow_between_factors}
Another potential leak is between factors themselves which may be mediated by an adversary. A secure scheme must ensure that compromising one factor does not open a side-channel for entropy to flow from other, uncompromised factors, effectively preventing collateral damage. Clearly, if one compromised factor can degrade the security of another, we lose the desired entropy property given in \S\ref{subsec:multi-factor_key_derivation}.

\begin{definition}
\label{def:factor_independence_game}
A KDF is Factor-Indistinguishability under Chosen Message Attack (Factor-IND-CMA) secure if for any probabilistic polynomial-time (PPT) adversary $\mathcal{A}$, its advantage in the following game is negligible. 
\begin{enumerate}
    \item Setup: The challenger creates a full MFKDF instance with a set of factors $\mathcal{F}$, which includes two distinct "challenge" factors, $F_A$ and $F_B$, of the same type. The adversary $\mathcal{A}$ chooses a subset of factors $\mathcal{F}_{\text{compromised}} \subset \mathcal{F} \setminus \{F_A, F_B\}$ and is given their complete secret material.
    \item Challenge: The challenger flips a random bit $b \in \{0, 1\}$. If $b=0$, the MFKDF instance will use $F_A$ as the active challenge factor. If $b=1$, it will use $F_B$.
    \item Interaction: The adversary plays the role of a malicious server and interacts with the challenger, who simulates an honest client, as defined in Game $\mathcal{G}$.
    \item Guess: Finally, the adversary $\mathcal{A}$ outputs a bit $b'$. The adversary wins if $b' = b$.
\end{enumerate}
The advantage of the adversary is defined as $\text{Adv}_{\mathcal{A}}^{\text{Factor-IND-CMA}}(\lambda) = |\Pr[b' = b] - 1/2|$.
\end{definition}

The ``Share Recovery Attack'' (two-time pad aspect) violated this principle by creating a dependency between states of a single factor, while the ``Factor Fungibility'' attack violated it by creating a dependency between two factors.

\subsection{Entropy Flow from Input to Public State}
Another leaky instance is the models the potential entropy flow from a factor's secret $\sigma$ to its own public state $\beta$ of an individual factor. This localized principle is therefore required of the local factor construction portion of MFKDF as opposed to the global key feedback mechanism. Note that this property implies that any factor construction, no matter the factor type, cannot betray the key used in its state transitions. Hence, this is what allows us to deem a specific factor construction to be free of entropy leaks. 

%It is also worth noting that this supposes that an individual factor itself is secure as we are not considering security proofs for individual factors such as H/TOTP, OIDC, or WebAuthn, etc.

\begin{definition}
\label{def:leak_free_factor}
A factor construction is \emph{Factor-Key Indistinguishability (KI) secure} if for any PPT adversary $\mathcal{A}$, its advantage in the following game is negligible. 
\begin{enumerate}
    \item Setup: The challenger generates two independent, uniformly random keys, $K_0, K_1 \in \{0,1\}^\lambda$.
    \item Challenge: The challenger flips a random bit $b \in \{0, 1\}$.
    \item Interaction: The adversary plays the Adversarial Interaction Game $\mathcal{G}$ with the challenger, where all state transitions for a single factor are computed using key $K_b$.
    \item Guess: Finally, the adversary $\mathcal{A}$ outputs a bit $b'$. The adversary wins if $b' = b$.
\end{enumerate}
The advantage of the adversary is defined as $\text{Adv}_{\mathcal{A}}^{\text{Factor-KI}}(\lambda) = |\Pr[b' = b] - 1/2|$.
\end{definition}

The ``Dynamic Factor Attack'' is a prime example of this failure, where a factor's public state $\beta$, combined with a single witness $W$, directly leaks the factor's source key material $\kappa_F$. The ``HOTP Bias Attack'' is a subtle example where the public state leaks information over many observations.

\subsection{Adversarial Flow into the Protocol}
Due to the public nature of the MFKDF state, we must also address the case where an adversary can inject their own information into the protocol's logic. It is quite clear that if an adversary could gain complete control over a client's code execution, then the derived key can immediately be compromised. Therefore, we restrict to the case where an adversary only has access to public state, but not the protocol that the client is running. 

\begin{definition}
\label{def:state_integrity}
A stateful scheme provides \emph{state integrity} if for any PPT adversary who can modify the public state, the probability that an honest client accepts a tampered state as authentic is negligible.
\end{definition}

The ``Share Dilution Attack'' and ``Parameter Tampering'' attacks are canonical examples of this, where an attacker injects a modified policy or KDF parameters to trick an honest client into weakening its own security.

\section{ESTMF Analysis of NS23}
\label{sec:analysis}
For the following, we suppose that an adversary wishes to degrade the security of NS23 by any means necessary. We let the adversary follow that of Def. \ref{def:adversarial_game} so that they may have access to a public state stream $\{\mathcal{B}_0,\mathcal{B}_1,\dots,\mathcal{B}_q\}$ of their own choosing where each $\mathcal{B}_j$ has a collection of factors that are $\sum_i m_{i,j}$-entropy. The adversary may also suggest the exact factors used to run the MFKDF derivation. We also suppose the challenger (the user deriving an MFKDF key) is ``rational'' and will reject running \texttt{Recover} to a new state (or policy) that degrades their security if they can detect it.

\subsection{State Integrity}
Consider the implications of a stateful cryptographic scheme lacking state integrity (Def. \ref{def:state_integrity}). Immediately we find a vulnerability in the adversarial game. If the adversary intercepts $\mathcal{B}_j$ and produces a malicious $\mathcal{B}_{j+1}$, they can call upon the challenger to run \texttt{Recover} with a chosen subset of factors $\mathcal{F}_C$. This leaves the challenger to input their correct witnesses $\mathcal{W}_C$.
Specifically, an adversary could insert $\mathcal{B}_{j+1}$ such that:
\[
    \sum_i m_{i,j+1} < \sum_i m_{i,j},
\]
so long as $\mathcal{W}_C$ are still valid, thereby decreasing the entropy of the next key derivation. One such way to do this would be to remove unused factors in the derivation process all together or, in a tougher to distinguish case, reduce the threshold required to derive the key.
This entropy leak exists at the protocol level even for a rational challenger as detection is not guaranteed. In practice, by trusting a server holding a copy of the MFKDF state, a rational client may simply not detect that the state was unexpectedly modified. To mitigate this vulnerability, we must adopt a change that forces the client to check for self-certified integrity prior to updating to a new state $\mathcal{B}_{j+1}$. To do so, we propose signing the state with the derived key $K$ via a secure MAC such as HMAC-SHA256. 
Afterward, if the attacker mounts a similar attack, regardless of the suggested state $\mathcal{B}_{j+1}$, the challenger detects a fraudulent state via the MAC and will halt the \texttt{Recover} process before a new state is produced, as this requires \texttt{Derive} to run first.
% Transparently, this relies on a the challenger being rational and the adversary being unable to tampering with the protocol itself since that would allow an adversary read the final derived key from memory.

\subsection{Factor Flow}
Suppose that our adversary $\mathcal{A}$ collects a set of compromised factors $\mathcal{F}_C$ associated to state $\mathcal{B}_{j}$; does this reduce the entropy of the remaining factors? A factor construction satsifying Factor-IND-CMA (Def. \ref{def:factor_independence_game}) seals information from flowing from itself over a state change and, likewise, from one factor to another even within the same state or subsequent state. 

In a simple case, perhaps the adversary holds valid factor witnesses $\mathcal{W}_C$ but is unaware of the precise input locations of each witness. If the NS23 implementation provided the entropy property (\S\ref{subsec:multi-factor_key_derivation}), then the attacker must brute-force and exhaustively try all possible combinations of witnesses. 

This is indeed not satisfied by the NS23 implementation. For $n$ of $n$ threshold MFKDF or policy-based MFKDF with key stacking, we derive a shared secret $S$ via XOR based sharing of shares $s_i$:
\[
    S = \bigoplus_i s_i.
\]
Then since shares $s_i$ are derived by $s_i = \operatorname{pad}_i \oplus \operatorname{KDF}(\sigma_i)$, we find that factors leak information between each other in this construction as the XOR operation above is commutative so the order in which the $\operatorname{pad}_i$ are applied is irrelevant. Hence, the ESTMF framework determines a combinatorial reduction in entropy due to the fact that factor submission order is unspecified. This suggests Shamir's Secret Sharing (SSS) must be used even in the $n$ of $n$ case.

Furthermore in the threshold case, we can see that by having even a single compromised factor $F_C$, the attacker knows all about the encrypted share associated to that factor such as the encrypted share $c_C$ as well as the secret $s_C$ and encryption pad $\operatorname{pad}_C$ via $c_C = s_C \oplus \operatorname{pad}_C$. At a later time, it's clear to see that information from this compromised factor remains unless the share itself is updated. 
Even if the factor is not compromised, the entropy leak happens upon state update. If the adversary observes the public state, it notices encrypted shares $c_{i}$ and $c_{i+1}$ and finds
\[
c_i \oplus c_{i+1} = (s_i \oplus \operatorname{pad}_i) \oplus (s_i \oplus \operatorname{pad}_{i+1}) = \operatorname{pad}_i \oplus \operatorname{pad}_{i+1}.
\]
which implies that this single factor is no longer "independent" in the sense that it can now be attacked on its own right. Hence, a state update allows a factor to leak entropy "back to itself" which reduces entropy of the whole scheme.

\subsection{Factor Secret Flow}
This channel models the flow of entropy from a factor's internal secrets ($\sigma_i$ or $\kappa_{i}$) to an adversary. The leak arises when the design of the MFKDF state machine creates a new attack surface that undermines the security of an individual factor construction, even if that factor is considered secure in other contexts. The ESTMF principle of Factor Key-Indistinguishability (Factor-KI) (Def. \ref{def:leak_free_factor}) is the property required to seal this channel. A factor construction that does not satisfy this game can expose its secrets in two ways.

First, a protocol may create a direct leak by storing public helper data that, when combined with a single valid witness, reveals the factor's underlying secret. This is a significant risk for dynamic factors, where the construction must convert a changing witness into a static secret. If the public state contains an offset or value that is algebraically related to both the witness and the secret, then a single observation of a valid witness can lead to a total compromise of that factor's entropy. Similarly, due to an algebraic relation such as a stored offset, with one valid witness, other stored witnesses could be found. 

Second, a protocol can create a statistical leak over time. Even if a single public state reveals no information, the stream of public states across many invocations may expose subtle biases or non-uniformities in a factor's construction. An adversary who can observe this stream can perform statistical analysis to gradually learn information about the factor's secret. ESTMF's focus on analyzing the entire state stream $(\mathcal{B}_0, \mathcal{B}_1, \dots, \mathcal{B}_q)$ forces designers to consider these long-term, low-bandwidth leaks that static analyses would miss.

\subsection{Master Secret Flow}
The principle of Master Secret Indistinguishability (Def. \ref{def:master_secret_indistinguishability}) is the essential security property of the ESTMF. It models the total entropy flow from the master secret $M$ to an adversary who can observe and modify the entire public state stream over the protocol's lifetime. A scheme that is MSI-secure is one whose public footprint is computationally independent of the core secret it protects, even after many state transitions.

MSI bridges each of the principles and sets the highest bar for defense. For example, a failure of state integrity allows an attacker to inject a tampered state that may cause the protocol to behave differently depending on the master secret, creating a distinguisher that breaks MSI indirectly. Similarly, a failure of Factor-KI could allow a compromised factor to leak information about a share, which is derived from the master secret. Lastly, Factor-IND-CMA can catch leaks between the different factors themselves, such as lack of requiring order, which again causes a failure to be MSI secure.

Yet there are attacks that fall outside of the scope of the previous definitions that MSI would detect as well. Given that the adversary may have access to specific factor secrets $\sigma$, this must not also leak any further information about the MFKDF derived key $K$ than knowledge of the associated witness $W$ to the compromised factor. In a sense, MSI also detects ``privilege escalation'' leaks whereby an adversary bypasses other factors by virtue of having an subset that should not be able to derive $K$.
Specifically, we note that NS23 uses the derived key $K$ to encrypt factor secrets by
$c = \sigma \oplus K$
which means an attacker that knows $\sigma$ (by compromise) and $c$ (because it is public) then knows $K$ by the invertibility of $\oplus$. It is clear that we must seal this channel by requiring an encryption that has an algebraic obfuscation property such as a Pseudo-Random Permutation (PRP) as opposed to a commutative XOR.
\section{MFKDF2}
\label{sec:mfkdf2}
The cryptanalysis of the original MFKDF using ESTMF revealed several classes of vulnerabilities which are also outlined in \S\ref{subsec:cryptanalysis}. The MFKDF2 construction is designed to systematically satisfy ESTMF and therefore solve each vulnerability described by SBH24. This section details the specific security improvements in MFKDF2 and explains how they mitigate the known attacks.

\subsection{Hardened Cryptographic Defaults}
To prevent parameter tampering attacks, MFKDF2 removes security-critical parameters from the public state. The KDF is fixed to Balloon Hash \cite{CorriganGibbs2016} using SHA3-256 \cite{FIPS202}, and the derived key length is fixed to 256 bits. This ensures that an attacker with control over the public state cannot trick the client into using a weaker set of cryptographic primitives. In particular, we avoid the XOF attack which allows for an attacker to derive a key a few bits at a time.

\subsection{Per-Factor Salting}
% A vulnerability in the original implementation, not documented in prior cryptanalysis, allowed for funbigility between many factors. For example, using two passwords in a 2 of 2 MFKDF allowed an adversary to supply the correct passwords in the wrong order and still derive the correct key. This effectively reduces the complexity for an attacker and breaks the intended security model of ordered, independent factors.

MFKDF2 introduces mandatory \emph{per-factor salting}. Each factor instance is combined with a unique, randomly generated salt that is stored in the public state before its secret material is used. This makes each factor instance cryptographically unique, preventing any reordering or swapping attacks. This property can be formally proven by showing that the output of the salted construction is computationally independent of the order of the factors.

\subsection{Information Theoretic Secret Sharing}
% The original MFKDF specification did not mandate a specific finite field for its Shamir's Secret Sharing (SSS) construction. This ambiguity could lead to implementations with non-uniform share representations, which, as shown in the "Share Format Attack," can leak information and act as a distinguisher for brute-force attacks.

MFKDF2 explicitly implements bytewise secret sharing over the finite field $\operatorname{GF}(256)$. We do formally prove in Lem. \ref{lem:sss_indistinguishability} that sharing over $\operatorname{GF}(2^q)$ suffices for a key of 256 bits if $2^q$ divides 256. This ensures that the SSS is information-theoretically secure and that the shares are uniformly random bytes, preventing any potential biases or side-channels. A practical consequence of operating over $\operatorname{GF}(256)$ is that a maximum of 255 unique shares can be generated for any given secret byte (see \cite[Theorem 22.3]{BonehShoup2020}).

\subsection{Encryption with a Secure PRP}
The entire class of vulnerabilities related to cryptographic primitive misuse is solved by replacing the insecure one-time pad (XOR) with a secure Pseudorandom Permutation (PRP). Results for which follow from a standard result (Thm. \ref{thm:prp_kpa_security}) and lead us to Thm. \ref{thm:prp_share_encryption}. Hence, in MFKDF2 we specify the use of AES-256 as a block cipher to encrypt shares.

\subsection{State Integrity via Self-Referential MAC}
To prevent all forms of state tampering, MFKDF2 mandates that the entire public state be authenticated with a MAC (specifically HMAC-SHA256) tag $\mathcal{T}$ computed with the derived key. As proven in Prop. \ref{prop:state_integrity}, this self-referential check creates an unbreakable loop. An honest client will only accept a state if the key it derives from that state correctly validates the state's integrity. Any unauthorized modification by an attacker will cause this check to fail, forcing the client to abort. We find that Prop. \ref{prop:state_integrity} provides this security in the case of a rational client running the MFKDF2 algorithm.

\subsection{Timing Oracles for Dynamic Factors}
\label{subsec:timing_oracle}
There is a fundamental concern when using dynamic factors to derive a static key. In essence, dynamic material is converted to static material so that a key may be derived entirely by a client. This creates an issue whereby an attacker holding a copy of an old policy having seen one valid TOTP code (e.g., by shoulder-surfing) can write the down time this code was valid, and then access this factor share at any point in the future. This does not mean that the exponential security guarantee of NS23 is broken, but it does mean that the typical threat model assumed by TOTP is not satisfied by NS23, which is certainly a point of confusion for an end user, as the expectation is that a single TOTP witness is ephemeral. In NS23, a TOTP code effectively serves as an additional 20-bit (static) password, whereas in the typical model, a code is valid only for the window since a trusted server is able to verify the time alignment in its half of the protocol. 

In MFKDF2, we mitigate this concern
by introducing an optional timing oracle, a stateless, online service that assists in the TOTP factor derivation process, enforcing time alignment and maintaining exponential security guarantees solely by introducing the risk of a liveness failure for a single factor.

Under the hood, the timing oracle essentially stores an internal secret ``pepper'' value, and runs its own copy of the TOTP protocol using an HMAC of the pepper value and a user-provided key. Upon setup, the TOTP value for each timestep is subtracted from the TOTP value provided by the user's TOTP application, simultaneously perfectly offsetting the attackable HOTP biases described by SBH24
and ensuring that expired TOTP codes can no longer be used to derive a key unless the oracle was queried at that instant.

% This protocol ensures that the MFKDF key $K$ is never exposed to the oracle. The interaction provides two key security benefits. First, it enforces time alignment, as a valid witness can only be generated with a fresh, signed response from the live oracle, defeating replay attacks. Second, by having the oracle perform the TOTP calculation and apply a randomness extractor, it completely mitigates the HOTP/TOTP bias attack, ensuring the derived source key material is uniformly random. Likewise, this addresses the problem where knowledge of one TOTP witness allows for determining all future TOTP codes as well.

\subsection{Security Under ESTMF}
If we take MFKDF2 as a protocol adapting each of the previous security improvements given in this section, we show it is a provably secure system under the ESTMF framework. Let us now walk through the arguments.

First, we note that by virtue of fixing the key size to 256 bits and fixing the underlying KDF to Balloon-SHA3-256, the addition of the self referential MAC prevents parameter tampering via Prop. \ref{prop:state_integrity}. This greatly reduces the adversary's capabilities in concert with subsequent security guarantees.

Next, by removing the use of a one-time-pad for encryption, the replacement with a PRP (i.e., a block cipher such as AES-256 suffices) prevents attacks that use algebraic properties of the one-time-pad to leak information about the underlying encryption key. This is a standard result given by Thm. \ref{thm:prp_kpa_security}, but it also implies Thm. \ref{thm:prp_share_encryption} which tells us that using a PRP for encryption of factor secrets or share encryption does not leak entropy through their own channel. 

Additionally, even without modifying share encryption, Thm. \ref{thm:share_updating} implies that simply updating shares after a recovery operation guarantees forward secrecy. Combined with the previous result, this gives us a security-in-depth approach.

Continuing, given that the collection of factors used for MFKDF2 satisfy the Factor-KI (Def. \ref{def:leak_free_factor}), we can show that the MFKDF2 scheme itself is secure in the $n$ of $n$ case via Thm. \ref{thm:factor_independence_n_of_n} since this does not require any secret sharing. 

For the threshold case, we show via Lemma \ref{lem:sss_indistinguishability} that for $b$-bit keys, SSS over $q$-bit chunks where $b$ is a multiple of $q$ is provides indistinguishable shares that are also information theoretically secure, so long as less than $2^q$ shares are issued. For MFKF2, this implies SSS for a 256 bit key over $\operatorname{GF}(256)$ is secure. Following the previous theorem and using this lemma, we prove the threshold MFKDF2 is secure in Thm. \ref{thm:factor_independence_threshold} which implies the policy MFKDF2 is secure as well.

Given MFKDF2 and its associated factor constructions satisfy the above properties, we prove the final result of Thm. \ref{thm:mfkdf2_achieves_msi}. All together, this means that the new construction of MFKDF2 itself is secured under the ESTMF framework.

\section{New Factors in MFKDF2}
\label{sec:factors}

To broaden the use cases of MFKDF2, we also provide many new factors to use. These allow for modern authentication primitives such as Passkeys, OIDC, fuzzy or enclave-based factors for hardware or biometrics, and other factors typically used by mobile devices or by other proximity aware hardware.

\subsection{Passkeys (WebAuthn-PRF)}
Passkeys are a modern form of authentication that uses a signing key (typically from \texttt{secp256r1}). Usually, these passkeys are stored in a password manager \cite{1PasswordPasskeysProduct} and perhaps on a secure element \cite{AppleHardwareSecurity}. The brief overview is that a passkey is given a random challenge, it signs the random challenge and the server verifies the signature since it has a reference to the public key. 
Since passkeys are becoming a widespread, high-entropy authentication method, they are an excellent candidate for use as a factor in MFKDF2. A naive implementation using standard WebAuthn signatures is not possible, as the signatures are intentionally non-deterministic to prevent replay attacks, which makes it impossible to derive a static key from them.

Instead, we leverage the {WebAuthn PRF extension} (\texttt{prf}), which is specifically designed to enable the derivation of deterministic, secret values from a FIDO2 authenticator. The construction closely mirrors the HMAC-SHA1 challenge-response factor from the original MFKDF paper \cite{NairSong2023}. The goal is to use the \texttt{prf} extension to derive a static \texttt{prf\_key}, which then serves as the factor's secret material (see Alg. \ref{alg:mfkdf2_passkey_factor}).

\subsection{Fuzzy Encryption Factors}
To support factors with inherently noisy or variable inputs, such as biometrics, geolocation, or behavioral patterns, MFKDF2 introduces support for fuzzy encryption. This allows a stable, high-entropy secret to be derived from an input that is merely ``close enough'' to an original enrollment sample, rather than being identical.

Beyond biometrics, this technique enables several novel, noise-tolerant factors. For example, \emph{geolocation} can be used to ensure that key derivation is possible if the user's current GPS coordinates are within a predefined radius of an enrolled location. Second, \emph{behavioral biometrics} could ensure that a user's unique typing cadence (keystroke dynamics) or mouse movement patterns are captured to prevent bot login.

The core cryptographic primitive that enables this is a \emph{fuzzy extractor} \cite{DodisOstrovsky2008}. A fuzzy extractor consists of two algorithms, \texttt{Gen} and \texttt{Rep}, which operate as follows:
\begin{itemize}
    \item $\texttt{Gen}(w) \rightarrow (K, P)$: The generation algorithm takes a sample $w$ (e.g., an initial fingerprint scan) as input. It outputs a uniformly random cryptographic key $K$ and a public piece of helper data $P$. The sample $w$ can be discarded.
    \item $\texttt{Rep}(w', P) \rightarrow K$: The reproduction algorithm takes a fresh, noisy sample $w'$ (e.g., a new fingerprint scan) and the public helper data $P$. If the distance between $w$ and $w'$ is within a certain error tolerance $t$ (i.e., $\text{dist}(w, w') \le t$), it reliably reconstructs the exact same key $K$. 
\end{itemize}
Using the above, we define the relevant \texttt{FactorSetup} and \texttt{FactorDerive} algorithms in Alg. \ref{alg:mfkdf2_fuzzy_factor}.

\subsection{Risk-based and Environmental Factors}
\label{subsec:risk-based_environmental_factors}
Many types of identifying information related to the user's environment can be used to improve the user experience of MFKDF2 while also increasing security. Some of these factors, like device profiles, can be noisy and may be best handled with fuzzy extractors. Yet, in the case of strict requirements, these factors are static and follow the same algorithm as a password or security question, but will not require users to engage directly. 
The typical use case is adaptive authentication. In the case that a collection of low-risk factors are present (e.g., a known device profile from a familiar location), the policy can allow the user to bypass a stronger secondary factor like a TOTP. Alongside this example, a user may wish to add a new device to their list of risk-based factors which they can do by using the stronger secondary factors to allow for \texttt{Recover} to update the policy.

Some possible factors analyze the origin and reputation of the user's network connection. For example, an IP address can be checked to ensure it is on a known list of trusted IPs for a given user. Second, device fingerprinting that uses a detailed profile of the user's hardware (CPU class, memory), software (OS version, installed fonts), and browser configuration (plugins, screen resolution, language settings) is also useful as a factor. One could also use a time-of-day window as well. It should be noted that the intent of such factors is to layer on security and/or improve user experience.

\subsection{Mobile Factors}
Mobile devices are ubiquitous and have become a cornerstone of modern multi-factor authentication, acting as a trusted device that users almost always have with them. MFKDF2 leverages this by introducing support for mobile-centric authentication flows that are both secure and user-friendly.

\subsubsection{QR Code Factors}
Secure QR Login (SQRL) \cite{grcGRCapossxA0xA0SQRLSecure} is a protocol for secure authentication using QR codes. We can adapt this challenge-response mechanism to create a strong MFKDF2 factor.

We adapt the challenge-response mechanism of SQRL to create a high-entropy, QR code-based factor. During setup, the user's mobile device registers its public key with the MFKDF2 policy. To derive the key, the client generates a unique, random challenge and displays it as a QR code. The user scans this code with their mobile device, which in turn signs the challenge with its private key. This deterministic signature serves as the static factor material for the derivation, in a flow analogous to the HMAC-based hardware token factor but using asymmetric cryptography (see Alg. \ref{alg:mfkdf2_sqrl_factor}).

\subsubsection{Push Notification Factors}
Push notifications provide a seamless way to use a mobile device as an "out-of-band" authenticator by turning a user's approval on a second device. During setup, the mobile device's public key is registered with the MFKDF2 policy. To derive the key, the client sends a unique, random challenge to the user's device via a push notification service. Upon user approval, the mobile app signs the challenge with its private key. This signature is then returned to the client through a secure channel and serves as the static secret factor material.

\subsection{Biometric Factors via Secure Enclaves}
Biometric authentication is a prime use case for the fuzzy encryption techniques discussed previously. However, for security and privacy, modern platforms do not expose raw biometric data. Instead, they provide high-level APIs that leverage on-device {Secure Enclaves} or {Trusted Execution Environments (TEEs)}. MFKDF2 integrates with these systems (e.g., Apple's Face/Touch/Optic ID \cite{AppleHardwareSecurity}, Windows Hello \cite{WindowsHello}) to create robust, hardware-backed factors.

MFKDF2 uses secure-hardware based factors in the following way: during enrollment, the client instructs the device's secure enclave to generate and store a new high-entropy secret, which is cryptographically bound to the user's biometric data and never leaves the enclave.
To use the factor, the client requests the secret from the enclave. The enclave, in turn, prompts the user for biometric authentication. If the scan is a successful match against the enrolled data, the enclave releases the secret to the client. This released secret then serves as the static, high-entropy factor material for the MFKDF2 derivation. This approach provides a seamless user experience while ensuring that sensitive biometric information is never processed or stored outside of the device's secure hardware.

\subsection{Proximity Factors}
Deriving keys based on physical proximity to a device or location is a powerful use case, particularly for physical access control. MFKDF2 can integrate with common short-range communication technologies like Near-Field Communication (NFC) and Radio-Frequency Identification (RFID) to enable these scenarios. While the physical properties of these technologies differ, they can be abstracted to the same secure, cryptographic challenge-response protocol.
The core mechanism prevents simple cache and replay attacks. During setup, a physical tag (e.g., an NFC sticker or an RFID fob) is provisioned with a high-entropy secret key. To derive, the MFKDF2 client (a reader) generates a fresh, random challenge which is stored in the public state for the next derivation. The client transmits this challenge to the tag, which computes a deterministic response, typically an HMAC of the challenge with its secret key. This response is transmitted back to the client and serves as the static, high-entropy factor material.
The primary difference between the two factors lies in their application. NFC, with its very short range of less than 4 cm, is ideal for intentional, ``tap-to-derive'' interactions, such as unlocking a personal workstation or a specific door. RFID operates at a longer range, making it well-suited for seamless, hands-free use cases, such as granting a vehicle access to a garage or tracking assets within a secure facility. 
% Proximity factors pair well with risk-based and environmental factors from \S\ref{subsec:risk-based_environmental_factors}.

\subsection{OAuth 2.0 and OpenID Connect Factors}
While OAuth 2.0 is an authorization framework, the {OpenID Connect (OIDC)} identity layer built on top of it provides the necessary primitives to create a secure MFKDF2 factor, allowing a user to leverage the high security of their major online accounts (e.g., Google, Apple). The construction uses the OIDC provider as a trusted third party to sign a challenge. The MFKDF2 client generates a large, random \texttt{nonce} and initiates an OIDC authentication flow, including the \texttt{nonce} in the request. After the user authenticates with the provider, the provider issues a signed JSON Web Token (\texttt{id\_token}) that contains the \texttt{nonce}. The signature of this token is a deterministic, high-entropy value that serves as the static factor material for the derivation.

\section{New Features in MFKDF2}
\label{sec:features}

Beyond addressing the known (and newly-discovered) vulnerabilities of its predecessor, MFKDF2 introduces two new features designed to enhance its long-term security, usability, and ease of adoption. Using a mechanism for key encapsulation, we can allow for increasing difficulty of the underlying KDF and with minimal entropy trade off we can provide a user hints on which factor may be correct.

Though not mentioned in the original manuscript, the MFKDF.js implementation supports an "envelope" API which allows users to securely store other secrets alongside an MFKDF derived key \cite{MFKDFDocsEnveloping}. This interface, in particular, allows for transferring a known key to an MFKDF encapsulated key. For example, it is common to use a BIP-39 \cite{BIP0039} seed phrase to derive private keys for cryptocurrency wallets. These seed phrases are often 12, 18, or 24 words that users store on various materials such as paper or metal and, since the value associated to these keys is quite high, they may often be duplicated or placed in safety deposit boxes. Instead, one could consider storing the seed phrase as an enveloped secret alongside an MFKDF derived key, or even just store the associated private key as an enveloped key. This allows a cryptocurrency user to keep their same private key, but remove the legacy storage media for a socially-recoverable MFKDF "vault". 

\subsection{Adaptive Security Parameters}
While MFKDF2 mandates strong, opinionated defaults to prevent parameter tampering, the above enveloping method provides a mechanism for cryptographic agility. This allows the security of the underlying KDF to be strengthened over time as computational power increases without changing the an enveloped key. 

The KDF parameters (e.g., Balloon hash memory and time costs) are stored in the public state and protected by the self-referential MAC. A user can propose an upgrade to stronger parameters. The protocol for this is as follows:
\begin{enumerate}
    \item The user derives their key $K$ using the current, authentic state with its existing KDF parameters. The MAC check ensures this state is valid.
    \item The client then creates a new public state, $B'$, which is identical to the old state but contains the new, stronger KDF parameters.
    \item The client computes a new tag $T' = \operatorname{MAC}(K, B')$ and sends the new state $(B', T')$ to the server.
\end{enumerate}
Because this operation requires the user to successfully derive the key $K$, only a legitimate user can authorize an increase in the KDF's computational cost. An attacker cannot weaken the parameters, as this would invalidate the MAC. 

\subsection{Probabilistic Factor Hints}

An additional feature of our MFKDF2 construction is the ability to store a small number of bits of entropy corresponding to one or more input factors in the MFKDF2 public policy, reducing the strength of the derived key by that number of bits, but allowing users to probabilistically validate.

For example, consider a 2-factor MFKDF setup consisting of a password and a TOTP code. Typically, if a user were to enter either factor incorrectly, they would derive the wrong key, but would not know which factor was incorrect. However, via MFKDF2's probabilistic hints feature, storing just 7 bits of the password's salted hash in plaintext would give legitimate users a greater than 99\% chance of knowing whether their password was the incorrect factor while offering a negligible amount of additional information to most adversaries.

% \textcolor{red}{We can provide a means to give users feedback on which factor may have been correct in their derivation. This loses some security since attackers will learn a bit of information about which factor was wrong. We can quantify this loss of entropy which should be equal to the likelihood of helping a user. This is worth proving some statements about.}

\section{Modes of Operation}
\label{sec:modes}

As detailed in \S\ref{sec:impact}, there were two subsequent papers released that used the MFKDF construction to produce further primitives: MFCHF \cite{NairSong2023_mfchf} and MFDPG \cite{NairSong2023_mfdpg}. In MFKDF2, we formalize MFCHF2 and MFDPG2 as ``modes of operation'' to simplify the usage and analysis of the MFKDF2 ecosystem.

\subsection{Architecture}
The core MFKDF2 algorithm serves as a foundational primitive for deriving a high-entropy master key from a multi-factor policy. The different modes of operation leverage this master key for distinct cryptographic purposes. This modular architecture allows a single, provably secure key derivation process to be used as a building block for a variety of applications, from server-side credential hashing to client-side deterministic password and passkey generation. This simplifies the overall design and ensures that improvements to the core MFKDF2 protocol are inherited by all modes of operation.

\subsection{Multi-Factor Credential Hashing}

While MFKDF2 secures client-side secrets, \emph{Multi-Factor Credential Hashing Function (MFCHF2)} addresses the critical server-side problem of protecting stored credentials post-breach. MFCHF2 provides asymmetric resistance to brute-force attacks by incorporating the entropy of a user's secondary factors directly into the stored password hash. This significantly increases the cost for an offline attacker without increasing the latency for a legitimate user.

The original MFCHF construction \cite{NairSong2023_mfchf} uses NS23's factor constructions to convert a dynamic witness (like a TOTP code) into a static, secret `target` value. This `target` is then hashed alongside the user's password using an adaptive hash function like Argon2. MFCHF2 follows this same powerful paradigm but inherits the hardened security of the underlying MFKDF2 construction, including its use of secure PRPs and robust state management. This prevents the specific cryptographic leaks found in the original NS23 factor constructions.

Furthermore, MFCHF2 improves upon the original by leveraging the full policy engine of MFKDF2. Instead of storing disparate public parameters for each factor, the server can store the single, opaque, and integrity-protected MFKDF2 public state. This has two advantages: first, it simplifies the server-side logic, and second, it obscures the details of the user's multi-factor policy from an attacker who has breached the database. An attacker can no longer immediately tell which users have simple password-only protection versus those with complex, multi-factor policies, forcing them to treat every credential as if it were maximally protected.

\subsection{Deterministic Pass(word/key) Generation}

In addition to deriving a master key, the MFKDF2 paradigm can be used as a foundation for a \emph{Multi-Factor Deterministic Password Generator (MFDPG2)}. This addresses the significant security risks of traditional password managers by generating high-entropy, site-specific passwords on demand, rather than storing them in a vault that can be breached. MFDPG2 extends the original MFDPG concept \cite{NairSong2023_mfdpg} to support not only the generation of traditional passwords but also the derivation of cryptographic keys, such as those used for passkeys.

The original MFDPG algorithm uses the key derived from NS23 as a master secret. This secret is then combined with a domain identifier (e.g., a website's URL) and fed into another KDF to produce a deterministic, site-specific seed. This seed is then used to generate a password that complies with the site's specific requirements (e.g., length, character sets). MFDPG2 follows this same robust, multi-stage process but inherits the enhanced security of the underlying MFKDF2.

\subsubsection{MFKDF2 for Passkey Derivation}
Beyond using passkeys as an input factor, MFKDF2 can also serve as a mechanism for deriving the passkey itself. This presents a powerful new paradigm that mitigates the need for storing a passkey's private key in a password manager (where its security is tied to a single factor) or on a single hardware device (where it can be lost), and instead forces the private key to be deterministically re-derived from multiple factors on demand. This is a specific mode of operation for MFDPG2 that is similar to the password format requirements of web applications.
The primary requirement for deriving a passkey is that the value must be a valid scalar for the chosen elliptic curve (e.g., \texttt{secp256r1}). A raw 256-bit output from a KDF is not guaranteed to be within the valid range of the curve's order. To solve this, we use a deterministic rejection sampling method, as recommended by standards like FIPS 186-5 \cite{FIPS186-5}:

\begin{enumerate}
    \item \textbf{Initial Derivation:} The MFKDF2 key $K$ is derived as usual from the user's chosen factors. This key $K$ serves as a high-entropy seed.
    \item \textbf{Candidate Generation:} We use an HKDF-like construction, using $K$ as the input keying material (IKM) and a domain separation tag (e.g., "MFDPG2-Passkey-Derivation"). We can then generate a sequence of candidate scalars $d_i$ by using an incrementing counter as the ``info'' parameter:
    $d_i = \operatorname{HKDF-Expand}(K, \text{"MFDPG2-Passkey"}, i)$
    \item \textbf{Rejection Sampling:} For each candidate $d_i$, we check if it is in the valid range $[1, n-1]$, where $n$ is the order of the curve's base point. The first candidate that falls within this range is selected as the passkey's private key.
\end{enumerate}
This construction provides a robust, recoverable, and purely cryptographic method for managing passkeys, elevating their security to the full strength of the user's multi-factor policy.

\section{Discussion}
\label{sec:discussion}

The Secure Hash Algorithm 1 (SHA1) \cite{rfc3174}, first released in 1995, was once considered a cornerstone of digital security, advancing the field of cryptographic hashing beyond the vulnerable MD5 family. Nearly ten years later, the popularity of SHA1 had finally motivated sufficient advancements in differential cryptanalytic techniques that the algorithm could be proven vulnerable to collision attacks \cite{cryptoeprint:2005/010}, leading to the adoption of the still widely-used SHA2 algorithm.

The story of SHA2 is not unique in the world of cryptography, and is in fact also seen in PBKDF2 \cite{rfc2898}, Argon2 \cite{argon2}, WPA2 \cite{wpa2}, and many other such widely-used cryptographic algorithms that built upon a much weaker initial version.
The reason, too, is often the same: the introduction of brand new cryptographic paradigms are often accompanied by outdated or poorly-understood threat models. Only once the feasibility of the new paradigm is taken sufficiently seriously by the cryptography community do cryptanalytic techniques eventually catch up to the extent necessary for subsequent versions of the scheme to be robustly constructed.

The journey from NS23's MFKDF construction to MFKDF2 has clearly followed a similar pattern, offering valuable insights into the design of stateful cryptographic protocols. While the NS23 construction was a powerful proof-of-concept, the subsequent SBH24 cryptanalysis underscored the critical lesson that by forging the new category of ``stateful KDFs,'' NS23 also inadvertently made the static security analysis used by PBKDFs obsolete. The true security of such a system can only be understood by modeling its evolution over time under an active adversary, as formalized by our ESTMF technique. MFKDF2 is the result of applying this more rigorous, dynamic analysis, leading to a production-ready design that is not just patched, but fundamentally hardened.

\subsection{Stateful KDF Design Best Practices}
% The development of MFKDF2 provides several generalizable lessons for the design of secure stateful cryptographic KDFs:

\subsubsection{What's In a Secret?}

Stateful KDFs challenge conventional assumptions about which values are and are not sensitive in cryptographic protocols, and require careful consideration of which values may actually leak information to an adversary or fall under adversarial control. Two helpful guidelines are as follows:

\begin{enumerate}
    \item \textbf{Scrutinize commutative properties that distribute secrecy.} Commutative operations like bitwise XORs blur the line between ``plaintext'' and ``key,'' leading to vulnerabilities like SBH24's ``HOTP Compromise Attack'' where a leaked ciphertext and plaintext can lead to the compromise of a key. In \S\ref{sec:analysis}, we found that the same issue can also be leveraged to allow one TOTP code to leak future TOTP codes due to the misuse of modular addition, a similar commutative operation. Such operations should be replaced with PRP-based encryption when possible.

    \item \textbf{Security parameters are themselves sensitive values}. While security parameters like ``key length'' or ``threshold'' aren't typically treated as sensitive in other cryptographic primitives, parameter tampering attacks like SBH24's ``Short Key Attack'' demonstrate that these values should be treated as attacker-controlled unless their integrity is otherwise protected (such as by MFKDF2's self-referential MAC). Enforcing sensible fixed security parameter values is one way to avoid the issue entirely.
\end{enumerate}

\subsubsection{The Dangers of Authentication}

The second suggestion listed above is a great example of how a lack of integrity protection can cause a KDF construction to fail by unintentionally placing inputs into adversarial control.
In other cases, however, the presence of an integrity mechanism can cause more harm than good by allowing attackers to ``check their work,'' performing brute-force attacks more efficiently than should otherwise be possible. KDF designers should equally consider both implicit and explicit integrity:

\begin{enumerate}[resume]
    \item \textbf{Scrutinize explicit integrity mechanisms.} In MFKDF, encrypting shares with an authenticated encryption mechanism like AES-GCM would have allowed authentication factors to be individually guessed and checked, violating the exponential entropy goals of the scheme. Counterintuitively, unauthenticated schemes like AES-ECB are generally preferable in stateful KDFs.

    \item \textbf{Scrutinize implicit integrity mechanisms.} Equally important and perhaps more pernicious are mechanisms in KDFs that bias plaintext values and therefore implicitly authenticate them. In NS23, biases in the data structure used to represent Shamir's secret shares allowed decrypted shares to be implicitly verified, resulting in the ``Share Format Attack.'' Thus, even when using unauthenticated encryption like AES-ECB, plaintext values must be indistinguishable from uniformly random.
\end{enumerate}

\subsubsection{Security Across Many Invocations}

The core finding of this paper is that the single-invocation threat model used to analyze the security of PBKDFs is insufficient for analyzing stateful multi-factor KDFs. We can derive two key insights directly from this premise:

\begin{enumerate}[resume]
    \item \textbf{Switching moduli can leak entropy.} As demonstrated by SBH24's ``HOTP Bias'' attack, converting uniformly random values in some range (e.g., $[0, 2^{31})$) to a non-compatible modulus (e.g., $[0, 10^6)$) causes the values in the second modulus to inherit a slight bias. Over time, these biases can slowly leak information to an adversary, meaning that they must be avoided or counteracted.

    \item \textbf{Oracles derive trust from liveness.} Our use of a timing oracle in \S\ref{subsec:timing_oracle} to simultaneously address the ``Dynamic Factor'' and ``HOTP Bias'' attacks of SBH24 demonstrates a powerful design pattern whereby an optional online 3rd-party participant adds significant value to the protocol while only being trusted to be live (not honest).
\end{enumerate}

% Converting Trust to Liveness
% The  demonstrates a powerful design pattern. In many systems, a choice is made between a fully trustless offline model (which may be vulnerable to stale data attacks) and a trusted server model (which sacrifices cryptographic binding). The Timing Oracle offers a third option: it converts the problem of trusting a server with secrets into a much weaker and more acceptable requirement of trusting it to be live and honest about the current time. This allows MFKDF2 to retain its direct cryptographic binding to factors while still defeating replay attacks for dynamic factors in online settings.

\subsection{Stateful KDF Usage Best Practices}

Just as important as the secure design of KDFs is the secure implementation and use of KDFs in production systems. As the features and security properties of MFKDF2, MFCHF2, and MFDPG2 approach a state of production readiness, we also wish to take some time to discuss best practices for how these primitives should be utilized.

\begin{enumerate}
    \item \textbf{Enforce existing authentication policies.} Like the original constructions in NS23, MFKDF2 and MFCHF2 are flexible constructions that are agnostic to particular authentication policies, and don't impose specific factors or combinations thereof that should be used. Rather, these primitives should be used to take an existing, implicitly-enforced authentication policy and cause it to be enforced cryptographically to directly protect data.

    \item \textbf{Leverage defense-in-depth.} If PBKDFs are currently in use for encryption, then passwords are by definition a mandatory factor in the existing authentication policy. Rather than entirely replacing PBKDFs, leverage defense-in-depth by encrypting data with an MFKDF2 key on top of the existing PBKDF-based encryption. This approach has the dual benefit of not requiring existing data to be decrypted, and also guaranteeing that the use of MFKDF2 cannot weaken the overall security.

    \item \textbf{Guard key derivation behind authentication.} In practice, systems should use MFCHF2 and MFKDF2 together for authentication and encryption respectively (even with the same factors), protecting the MFKDF2 policy (and optional timing oracles) and behind an MFCHF2-based authentication process. In practice, this ensures that an adversary has already demonstrated active posession of all relevant authentication factors before they can begin to attack an MFKDF2-based key.
\end{enumerate}

\section{Future Work}

Though the ESTMF and MFKDF2 constructions presented in this paper provide a robust model for analyzing stateful KDFs and address all known vulnerabilities of the NS23 scheme, this work also raises several avenues for future research.

The ESTMF highlights that the entire class of state management vulnerabilities in NS23 stems from the existence of a public, mutable state. A truly ``stateless'' multi-factor KDF, as alluded to in SBH24, would be a paradigm shift, eliminating the need for cumbersome MAC-based state integrity mechanisms altogether.  
Research into constructions based on Oblivious Pseudorandom Functions could be promising, allowing a client to combine a factor with a server's secret to derive a key without either party learning the other's input. Similarly, techniques such as Function Secret Sharing, Secure Multi-Party Computation, or Fully Homomorphic Encryption could poteitally be investigated to create a stateless multi-factor KDF.
However, achieving this would almost certainly require a move towards interactive protocols, where a server-side secret contributes to the derivation without being stored in a client-side state, losing the ability to derive keys offline.

As the threat of cryptographically-relevant quantum computing grows, key management schemes must also be prepared for a post-quantum future.
While the core MFKDF2 construction presented in this paper relies almost entirely on symmetric cryptography and can therefore already reasonably called post-quantum, additional work may be required to upgrade factor constructions relying on asymmetric cryptography, such as the OOBA factors.
Furthermore, to maintain 256-bit security against a quantum adversary employing Grover's algorithm, the underlying symmetric key length must be doubled.
We therefore suggest that the fixed-length 256-bit MFKDF2 construction proposed in this paper be considered ``MFKDF2-256,'' and suggest the future development of an ``{MFKDF2-512}'' standard, which would be built upon 512-bit secret and intermediate values and a 512-bit stream cipher. 
% MFKDF2-512

% \subsection{New Applications and System Integrations}
In light of the new factors, features, and security improvements described in this paper, the robust security and flexibility of MFKDF2 make it suitable for a wide range of applications beyond its initial scope.
We propose future work investigating the use of MFKDF2 for OS-level full-disk encryption, allowing users to unlock their devices and encrypted volumes with a multi-factor policy (e.g., a password and a YubiKey), a significant improvement over the PBKDFs commonly used today.
MFKDF2 could also serve as a drop-in replacement for the password-based handshakes in protocols like WPA3's Simultaneous Authentication of Equals (SAE), enabling the creation of multi-factor protected Wi-Fi networks that allow for increased security and improved user experience.

Finally, while the proofs in this paper's appendices are constructed using standard cryptographic assumptions, the complexity of the MFKDF2 protocol makes it an ideal future candidate for formal verification. Such an analysis could provide even stronger, machine-checked guarantees about its security properties. Likewise, verifiable builds of the software ensures that clients run the correct algorithms.

\section{Conclusion}
\label{sec:conclusion}

The Multi-Factor Key Derivation Function proposed by NS23 had the compelling goal of incorporating multiple standard authentication factors directly into a client-side key derivation process, achieving exponential security gains over traditional PBKDFs \cite{NairSong2023}. Subsequent work showed that if realized, a secure MFKDF construction would offer quantifiable security and usability improvements in a wide variety of applications and settings \cite{nair2023mfdpgmultifactorauthenticatedpassword, NairSong2023_mfchf, NairSong2023_mfdpg, 10174998, 10.1145/3613904.3642464}.
However, subsequent cryptanalysis by SBH24 underscored a critical lesson for modern protocol design: for stateful cryptographic schemes, a static, single-invocation security analysis is insufficient \cite{Scarlata2024}. The true security of such a system can only be understood by modeling its evolution over time under an active adversary.

In this paper, we introduced the ESTMF as a novel technique to provide this necessary dynamic analysis for KDFs. By modeling a protocol as a state machine and formally defining the channels through which entropy can leak, ESTMF allowed us to systematically categorize the vulnerabilities in NS23 (including previously undiscovered vulnerabilities) and establish a rigorous security benchmark for a successor.

The result of this analysis is MFKDF2, a new construction that is fundamentally hardened against the classes of attacks that plagued its predecessor. By mandating the use of secure PRPs, enforcing state integrity with a self-referential MAC, and ensuring forward secrecy through stateful share regeneration, MFKDF2 is provably secure even within the stronger ESTMF model. Furtherm, we have shown that MFKDF2 can be extended with a rich ecosystem of modern authentication factors (including Passkeys), and new features like cryptographic upgradeability and threshold factor hints, demonstrating the power and flexibility of the MFKDF2 paradigm.

In the process of building the ESTMF, analyzing NS23, and constructing the hardened MFKDF2, we derived several generalizable best practices for the construction and use of KDFs that we hope will be insightful for the future users of MFKDF2 and builders of similar schemes.
Ultimately, it is our sincere hope that MFKDF2 follows in the footsteps of SHA2, PBKDF2, Argon2, and WPA2, and upholds the common trend whereby the second version of a cryptographic algorithm
builds upon the security lessons highlighted by a novel first release to achieve hardened security and widespread adoption.

% Ultimately, this work provides both a production-ready multi-factor key derivation function and a generalizable framework for analyzing the dynamic security of the next generation of stateful cryptographic systems.

\section*{Acknowledgements}

This work was conducted by Multifactor Research and was supported in part by the Zcash Foundation, the Hertz Foundation, Berkeley RDI.
Any opinions, findings, and conclusions or recommendations
expressed in this material are those of the authors and do not
necessarily reflect the views of the supporting entities.

\bibliographystyle{plain}
\bibliography{990_references}

\appendix

\clearpage

\section{Preliminaries}
% \subsection{Preliminaries}
We first recall the core components of the NS23 MFKDF model \cite{NairSong2023}. The framework consists of two main components: \emph{factor constructions}, which derive static secret material from various authentication methods, and \emph{KDF constructions}, which combine that material to derive a key.

\begin{definition}
A \emph{factor} $F_i$ is a cryptographic primitive defined by a secret component, the \emph{factor material} ($\sigma_i$), and a public component, the \emph{public state} ($\beta_i$).
\end{definition}

For example, for a password factor, the factor material $\sigma_i$ is the password itself, while its public state $\beta_i$ might be a static identifier. For a TOTP factor, $\sigma_i$ is the shared secret key, and $\beta_i$ contains public information like the current counter, which changes over time.

\begin{definition}
A \emph{factor construction} is a set of stateful algorithms, \texttt{Setup} and \texttt{Derive}, that manage a factor's lifecycle.
\begin{itemize}
    \item The \texttt{Setup} algorithm takes the initial factor material $\sigma_i$ and produces the initial public state $\beta_{i,0}$ and the static source key material $\kappa_{i}$.
    \item The \texttt{Derive} algorithm uses a \emph{factor witness} $W_{i,j}$ and the current public state $\beta_{i,j}$ to output the \emph{source key material} $\kappa_{i}$ and the next state $\beta_{i,j+1}$. For dynamic factors requiring a key feedback mechanism, this is a two-stage process. First, the source key material is reconstructed\footnote{Note that in \cite{NairSong2023}, the source key material $\kappa$ is referred to as ``factor material'' and the public state $\beta$ as ``factor parameters'' (denoted $\sigma$ and $\alpha$, respectively). We adopt this separated notation for clarity, as not all factor constructions use the factor material $\sigma$ as the source key material $\kappa$.}:
    \[ \kappa_{i} \leftarrow \texttt{Derive}_1(W_{i,j}, \beta_{i,j}) \]
    After the final MFKDF \emph{derived key} $K$ is derived, the second stage updates the state:
    \[ \beta_{i,j+1} \leftarrow \texttt{Derive}_2(K, \kappa_{i}, \beta_{i,j}) \]
\end{itemize}
\end{definition}

First, note that $\sigma$ and $\kappa$ are \emph{static} material and do not ever receive a state-step $j$ subscript. A motivating example that distinguishes $\sigma$ and $\kappa$ is the TOTP factor construction. The long-term secret shared with the user's authenticator app is the factor material $\sigma$ even though secret itself is never used directly in the main KDF. Instead, during \texttt{Setup}, a random, static source key material $\kappa$ is generated. The factor's public state $\beta$ then stores an encrypted version of $\sigma$ (using the key feedback mechanism) and public helper data (an "offset"). During \texttt{Derive}, the user's 6-digit TOTP code (the witness $W$) is combined with this public helper data to reconstruct the static $\kappa$. Thus, $\sigma$ is the underlying secret that ``powers'' the factor, while $\kappa$ is the consistent value that the factor contributes to the final key derivation.

To enable features like account recovery and flexible access control, NS23 introduces a thresholding mechanism based on Shamir's Secret Sharing. In this model, a \emph{master secret} $M$ is split into $n$ \emph{shares}, denoted $\{s_i\}$, which are then encrypted to create \emph{encrypted shares} $\{c_{i}\}$ and stored in the public state. Any $t$ of these shares can be used to reconstruct the master secret. The following \texttt{Policy} definition provides the most general version of this concept, allowing for arbitrarily complex combinations of factors beyond simple thresholds via "key stacking," as described in the original implementation. In the the $n$ of $n$ case, the master secret $M$ is deterministically derived by concatenating the salted hashes of the source key material $\{\kappa_{i}\}$ from all $n$ factors.

\begin{definition}
Given a set of factors $\mathcal{F}=\{F_1, F_2, \dots, F_n\}$, a policy $P$ is a set of all allowable factor combinations ($P=\{C_1, C_2, \dots\}$) that can be used to derive the final key $K$.
\end{definition}

The collection of all public parameters and encrypted secrets for a given policy forms the complete public state of the MFKDF instance at a specific point in time.

\begin{definition}
The complete set of public information for an MFKDF instance after the $i$-th derivation is its \emph{public state}, denoted $\mathcal{B}_i$. This state is a tuple containing the MFKDF construction parameters (e.g., policy, salt), the set of encrypted shares $\{c_{s_j}\}$, and the collection of all individual public factor states $\{\beta_{1,i}, \beta_{2,i}, \dots, \beta_{n,i}\}$.
\end{definition}

In addition to setting up and deriving keys, the NS23 paradigm introduced a mechanism for replacing lost or compromised factors without changing the final key, a process called reconstitution.

\begin{definition}
The \texttt{Recover} algorithm allows a lost factor $F_j$ to be replaced with a new factor $F'_i$. Let $\mathcal{W}_C = \{W_{i,j,k} \mid k \in C\}$ be a set of valid witnesses for a combination of factors $C$ that satisfies the policy, let $\sigma'_i$ be the material for the new factor $F'_i$, and let $\mathcal{B}_j$ be the current state. The state transition is defined as:
\[ \mathcal{B}_{j+1} \leftarrow \texttt{Recover}(W_C, \mathcal{B}_j, i, \sigma'_i) \]
\end{definition}

Together, the \texttt{Setup}, \texttt{Derive}, and \texttt{Recover} algorithms act as the state transition functions for the MFKDF state machine, noting that \texttt{Setup} is always run for initialization. The ability to update the state is what enables the MFKDF paradigm to support dynamic factors and account recovery. 

Lastly, we will have the MFKDF state carry an authentication tag $\mathcal{T}$. This can be used to guarantee the integrity of the state for a user of an MFKDF scheme.

\clearpage

\section{Algorithms}
\label{sec:algorithms}

This section provides the formal algorithmic construction for MFKDF2. We begin by defining the notation used throughout the specifications.

\subsection{MFKDF2 Construction}
\label{sec:mfkdf2_construction}

\subsubsection{MFKDF2 With No Secret Sharing}

Algorithm \ref{alg:mfkdf2} specifies the simplest construction without secret sharing. It improves upon the NS23 equivalent by introducing two critical security features. First, it adds a unique, per-factor salt ($\mathit{salt}_{i}$) to each factor's source key material before combination, which prevents the "Factor Fungibility" attack. Second, the entire public state is authenticated with a self-referential MAC tag ($\mathcal{T}$) to provide integrity and prevent tampering attacks.

\begin{algorithm}[htbp]
\caption{MFKDF2 $n$ of $n$ Construction}
\label{alg:mfkdf2}
\begin{algorithmic}[1]
\small
\Require Let $\mathit{KDF}$ be Argon2id, $\mathit{MAC}$ be HMAC-SHA256
\Function{Setup}{$\{\sigma_i\}_{i=1}^n$}
    \State $\mathit{salt}_K \gets \{0,1\}^{256}$
    \State $\{\beta_{i,0}\}_{i=1}^n, \{\kappa_{i}\}_{i=1}^n, \{\mathit{salt}_{i}\}_{i=1}^n \gets \text{empty lists}$
    \ForAll{$i \in \{1, \dots, n\}$}
        \State $(\beta_{i,0}, \kappa_{i}) \gets F_i.\texttt{Setup}(\sigma_i)$
        \State $\mathit{salt}_{i} \gets \{0,1\}^{256}$ \Comment{Generate per-factor salt}
    \EndFor
    \State $M \gets \bigodot_{i=1}^n \operatorname{Hash}(\kappa_{i} \ || \ \mathit{salt}_{i})$ \Comment{Combine salted factors}
    \State $K_0 \gets \mathit{KDF}(M, \mathit{salt}_K)$ \Comment{Bootstrap key}
    \ForAll{$i \in \{1, \dots, n\}$}
        \State $\beta_{i,0} \gets F_i.\texttt{Derive}_2(K_0, \kappa_{i}, \beta_{i,0})$ \Comment{Key feedback}
    \EndFor
    \State $\mathcal{B}_0 \gets (\mathit{salt}_K, \{\mathit{salt}_{i}\}_{i=1}^n, \{\beta_{i,0}\}_{i=1}^n)$
    \State $\mathcal{T}_0 \gets \mathit{MAC}(K_0, \mathcal{B}_0)$
    \State \Return $((\mathcal{B}_0, \mathcal{T}_0), K_0)$
\EndFunction

\Function{Derive}{$\{W_{i,j}\}_{i=1}^n, (\mathcal{B}_j, \mathcal{T}_j)$}
    \State $(\mathit{salt}_K, \{\mathit{salt}_{i}\}_{i=1}^n, \{\beta_{i,j}\}_{i=1}^n) \gets \mathcal{B}_j$
    \State $\{\kappa_{i}\}_{i=1}^n \gets \text{empty list}$
    \ForAll{$i \in \{1, \dots, n\}$}
        \State $\kappa_{i} \gets F_i.\texttt{Derive}_1(W_{i,j}, \beta_{i,j})$
    \EndFor
    \State $M' \gets \bigodot_{i=1}^n \operatorname{Hash}(\kappa_{i} \ || \ \mathit{salt}_{i})$
    \State $K' \gets \mathit{KDF}(M', \mathit{salt}_K)$ \Comment{Derive candidate key}
    \State $\mathcal{T}' \gets \mathit{MAC}(K', \mathcal{B}_j)$ \Comment{Self-referential integrity check}
    \If{$\mathcal{T}' \neq \mathcal{T}_j$} \Return $\bot$ \EndIf
    \State $\{\beta_{i,j+1}\}_{i=1}^n \gets \{\beta_{i,j}\}_{i=1}^n$
    \ForAll{$i \in \{1, \dots, n\}$}
        \State $\beta_{i,j+1} \gets F_i.\texttt{Derive}_2(K', \kappa_{i}, \beta_{i,j})$
    \EndFor
    \State $\mathcal{B}_{j+1} \gets (\mathit{salt}_K, \{\mathit{salt}_{i}\}_{i=1}^n, \{\beta_{i,j+1}\}_{i=1}^n)$
    \State $\mathcal{T}_{j+1} \gets \mathit{MAC}(K', \mathcal{B}_{j+1})$
    \State \Return $((\mathcal{B}_{j+1}, \mathcal{T}_{j+1}), K')$
\EndFunction
\end{algorithmic}
\end{algorithm}

\subsubsection{Threshold MFKDF Construction}
\label{sec:threshold_mfkdf_construction}

Algorithm \ref{alg:mfkdf2_threshold} specifies the more complex t-of-n threshold construction. Unlike the n-of-n version which combines all factors, this construction uses Shamir's Secret Sharing to enable policy-based derivation and recovery where only a subset of factors is required. This is achieved by splitting a randomly generated master secret $\kappa$ into $n$ shares, encrypting each share with a key derived from its corresponding factor, and storing the ciphertexts in the public state.

\begin{algorithm}[htbp]
\caption{MFKDF2 Threshold Construction}
\label{alg:mfkdf2_threshold}
\begin{algorithmic}[1]
\small
\Require Let $\mathit{KDF}$ be Argon2id, $\mathit{MAC}$ be HMAC-SHA256.
\Require Let $(\mathit{Share},\mathit{Comb})$ be SSS over $\operatorname{GF}(256)$.
\Require Let $\mathit{PRP}$ be AES-256.

\Function{Setup}{$t, \{\sigma_i\}_{i=1}^n$}
    \State $M \gets \{0,1\}^{256}$ \Comment{Generate master secret}
    \State $\mathit{salt}_K \gets \{0,1\}^{256}$
    \State $\{s_i\}_{i=1}^n \gets \mathit{Share}(M, t, n)$
    \State $\{\beta_{i,0}\}_{i=1}^n, \{\kappa_{i}\}_{i=1}^n \gets \text{empty lists}$
    \ForAll{$i \in \{1, \dots, n\}$}
        \State $(\beta_{i,0}, \kappa_{i}) \gets F_i.\texttt{Setup}(\sigma_i)$
    \EndFor
    \State $K_0 \gets \mathit{KDF}(M, \mathit{salt}_K)$ \Comment{Bootstrap key for initial MAC}
    \State $\{c_{i}\}_{i=1}^n \gets \text{empty list}$
    \ForAll{$i \in \{1, \dots, n\}$}
        \State $k_i \gets \operatorname{HKDF}(\kappa_{i})$ \Comment{Derive share key}
        \State $c_{i} \gets \mathit{PRP}_{k_i}(s_i)$
    \EndFor
    \State $\mathcal{B}_0 \gets (t, n, \mathit{salt}_K, \{\beta_{i,0}\}_{i=1}^n, \{c_{i}\}_{i=1}^n)$
    \State $\mathcal{T}_0 \gets \mathit{MAC}(K_0, \mathcal{B}_0)$
    \State \Return $((\mathcal{B}_0, \mathcal{T}_0), K_0)$
\EndFunction

\Function{Derive}{$\{W_{i,j}\}_{i \in C}, (\mathcal{B}_j, \mathcal{T}_j)$}
    \State $(t, n, \mathit{salt}_K, \{\beta_{i,j}\}_{i=1}^n, \{c_{i}\}_{i=1}^n) \gets \mathcal{B}_i$
    \State \If{$|C| < t$} \Return $\bot$ \EndIf
    \State $\{\kappa_{i}\}_{i \in C}, \{s_i\}_{i \in C} \gets \text{empty lists}$
    \ForAll{$i \in C$}
        \State $\kappa_{i} \gets F_i.\texttt{Derive}_1(W_{i,j}, \beta_{i,j})$
        \State $k_i \gets \operatorname{HKDF}(\kappa_{i})$
        \State $s_i \gets \mathit{PRP}^{-1}_{k_i}(c_{i})$
    \EndFor
    \State $M' \gets \mathit{Comb}(\{s_i\}_{i \in C})$
    \State $K' \gets \mathit{KDF}(M', \mathit{salt}_K)$ \Comment{Derive candidate key}
    \State $\mathcal{T}' \gets \mathit{MAC}(K', \mathcal{B}_j)$ \Comment{Self-referential integrity check}
    \If{$\mathcal{T}' \neq \mathcal{T}_j$} \Return $\bot$ \EndIf
    \State $\{\beta_{i,j+1}\}_{i=1}^n \gets \{\beta_{i,j}\}_{i=1}^n$
    \ForAll{$i \in C$}
        \State $\beta_{i,j+1} \gets F_i.\texttt{Derive}_2(K', \kappa_{i}, \beta_{i,j})$
    \EndFor
    \State $\mathcal{B}_{j+1} \gets (t, n, \mathit{salt}_K, \{\beta_{i,j+1}\}_{i=1}^n, \{c_{i}\}_{i=1}^n)$
    \State $\mathcal{T}_{j+1} \gets \mathit{MAC}(K', \mathcal{B}_{j+1})$
    \State \Return $((\mathcal{B}_{j+1}, \mathcal{T}_{j+1}), K')$
\EndFunction
\end{algorithmic}
\end{algorithm}

\subsection{New Factor Constructions}
\label{app:factor_constructions}

This section provides the formal algorithms for the new factors supported by MFKDF2.

\subsection{New Factor Constructions}
\label{sec:factor_constructions}

This section provides the formal algorithms for the new factors supported by MFKDF2.

\begin{algorithm}[H]
\caption{Factor Construction for Passkeys (WebAuthn-PRF)}
\label{alg:mfkdf2_passkey_factor}
\begin{algorithmic}[1]
\Require Let $\operatorname{WebAuthn.getPRF}$ be the WebAuthn PRF extension call.
\Function{Setup}{$\sigma_i$} \Comment{$\sigma_i$ contains the WebAuthn credentialId}
    \State $\mathit{credentialId} \gets \sigma_i$
    \State $\mathit{chal}_{i,0} \gets \{0,1\}^{256}$ \Comment{Generate initial random challenge}
    \State $\beta_{i,0} \gets (\mathit{credentialId}, \mathit{chal}_{i,0})$
    \State $\kappa_{i} \gets \bot$ \Comment{Source key material is derived dynamically}
    \State \Return $(\beta_{i,0}, \kappa_{i})$
\EndFunction

\Function{Derive$_1$}{$W_{i,j}, \beta_{i,j}$} \Comment{$W_{i,j}$ is the PRF output}
    \State $\kappa_{i} \gets W_{i,j}$
    \State \Return $\kappa_{i}$
\EndFunction

\Function{Derive$_2$}{$K', \kappa_{i}, \beta_{i,j}$}
    \State $(\mathit{credentialId}, \mathit{chal}_{i,j}) \gets \beta_{i,j}$
    \State $\mathit{chal}_{i,j+1} \gets \{0,1\}^{256}$ \Comment{Generate next random challenge}
    \State $\beta_{i,j+1} \gets (\mathit{credentialId}, \mathit{chal}_{i,j+1})$
    \State \Return $\beta_{i,j+1}$
\EndFunction
\end{algorithmic}
\end{algorithm}

Algorithm \ref{alg:mfkdf2_passkey_factor} shows the MFKDF2 factor construction for Passkeys, which leverages the WebAuthn PRF extension \cite{webauthn_prf}. This provides a modern, asymmetric challenge-response mechanism. The public state $\beta_{i,j}$ contains the credential ID and a random challenge $c_{i,j}$. The user's authenticator computes the PRF output over this challenge, which serves as the witness $W_{i,j}$ and is directly used as the source key material $\kappa_{i}$. A new random challenge is generated in \texttt{Derive}$_2$ to ensure freshness for the next derivation.

\subsubsection{Fuzzy Encryption Factor}

\begin{algorithm}[htbp]
\caption{Factor Construction for Fuzzy Encryption}
\label{alg:mfkdf2_fuzzy_factor}
\begin{algorithmic}[1]
\Require Let $(\operatorname{FE.Gen}, \operatorname{FE.Rep})$ be a Fuzzy Extractor scheme.
\Function{Setup}{$\sigma_i$} \Comment{$\sigma_i$ is the initial high-quality sample $w$}
    \State $w \gets \sigma_i$
    \State $(\kappa_{i}, P) \gets \operatorname{FE.Gen}(w)$ \Comment{Generate key and public helper data}
    \State $\beta_{i,0} \gets P$
    \State \Return $(\beta_{i,0}, \kappa_{i})$
\EndFunction

\Function{Derive$_1$}{$W_{i,j}, \beta_{i,j}$} \Comment{$W_{i,j}$ is the new noisy sample $w'$}
    \State $w' \gets W_{i,j}$
    \State $P \gets \beta_{i,j}$
    \State $\kappa_{i} \gets \operatorname{FE.Rep}(w', P)$ \Comment{Reproduce key from noisy sample}
    \State \Return $\kappa_{i}$
\EndFunction

\Function{Derive$_2$}{$K', \kappa_{i}, \beta_{i,j}$}
    \State $\beta_{i,j+1} \gets \beta_{i,j}$ \Comment{Helper data is static and does not require key feedback}
    \State \Return $\beta_{i,j}$
\EndFunction
\end{algorithmic}
\end{algorithm}

Algorithm \ref{alg:mfkdf2_fuzzy_factor} specifies the construction for noise-tolerant factors using a Fuzzy Extractor \cite{DodisOstrovsky2008}. During \texttt{Setup}, the initial high-quality sample (e.g., a biometric scan) is used to generate a stable, high-entropy source key material $\kappa_{F_j}$ and public helper data $P$, which is stored as the public state $\beta_j$. During \texttt{Derive}, a new, noisy sample is provided as the witness, and the \texttt{Rep} algorithm uses it along with the helper data to reliably reconstruct the original $\kappa_{F_j}$. This construction is stateless after initialization, as the helper data does not need to be updated.

\subsubsection{SQRL-based Factor}

\begin{algorithm}[htbp]
\caption{Factor Construction for SQRL}
\label{alg:mfkdf2_sqrl_factor}
\begin{algorithmic}[1]
\Require Let $\operatorname{Sign}$ be a deterministic digital signature algorithm (e.g., EdDSA).
\Function{Setup}{$\sigma_i$} \Comment{$\sigma_i$ is the mobile device's public key $pk$}
    \State $pk \gets \sigma_i$
    \State $\mathit{chal}_{i,0} \gets \{0,1\}^{256}$ \Comment{Generate initial random challenge}
    \State $\beta_{i,0} \gets (pk, \mathit{chal}_{i,0})$
    \State $\kappa_{i} \gets \bot$
    \State \Return $(\beta_{i,0}, \kappa_{i})$
\EndFunction

\Function{Derive$_1$}{$W_{i,j}, \beta_{i,j}$} \Comment{$W_{i,j}$ is the signature over the challenge}
    \State $(pk , \mathit{chal}_{i,j}) \gets \beta_{i,j}$
    \State \Comment{Client verifies signature $W_{i,j}$ over challenge $\mathit{chal}_{i,j}$ using $pk$}
    \State $\kappa_{i} \gets W_{i,j}$
    \State \Return $\kappa_{i}$
\EndFunction

\Function{Derive$_2$}{$K', \kappa_{i}, \beta_{i,j}$}
    \State $(pk, \mathit{chal}_{i,j}) \gets \beta_{i,j}$
    \State $\mathit{chal}_{i,j+1} \gets \{0,1\}^{256}$ \Comment{Generate next random challenge}
    \State $\beta_{i,j+1} \gets (pk, \mathit{chal}_{i,j+1})$
    \State \Return $\beta_{i,j+1}$
\EndFunction
\end{algorithmic}
\end{algorithm}

Algorithm \ref{alg:mfkdf2_sqrl_factor} specifies the construction for a QR code-based factor, adapting the challenge-response flow of a protocol like SQRL. During \texttt{Setup}, the public key of the user's mobile device is stored in the public state $\beta_j$ alongside an initial random challenge. To derive, the client displays the current challenge as a QR code. The user scans it, their mobile device signs the challenge, and the resulting signature serves as the witness $W_{j,i}$ and the source key material $\kappa_{F_j}$. A new challenge is generated for the next iteration to ensure freshness.

\subsubsection{Push Notification Factor}

\begin{algorithm}[H]
\caption{Factor Construction for Push Notifications}
\label{alg:mfkdf2_push_factor}
\begin{algorithmic}[1]
\Require Let $\operatorname{Sign}$ be a deterministic digital signature algorithm (e.g., EdDSA).
\Function{Setup}{$\sigma_i$} \Comment{$\sigma_i$ is the mobile device's public key $pk$}
    \State $pk \gets \sigma_i$
    \State $\mathit{chal}_{i,0} \gets \{0,1\}^{256}$ \Comment{Generate initial random challenge}
    \State $\beta_{i,0} \gets (pk, \mathit{chal}_{i,0})$
    \State $\kappa_{i} \gets \bot$
    \State \Return $(\beta_{i,0}, \kappa_{i})$
\EndFunction

\Function{Derive$_1$}{$W_{i,j}, \beta_{i,j}$} \Comment{$W_{i,j}$ is the signature over the challenge}
    \State $(pk , \mathit{chal}_{i,j}) \gets \beta_{i,j}$
    \State \Comment{Client verifies signature $W_{i,j}$ over challenge $\mathit{chal}_{i,j}$ using $pk$}
    \State $\kappa_{i} \gets W_{i,j}$
    \State \Return $\kappa_{i}$
\EndFunction

\Function{Derive$_2$}{$K', \kappa_{i}, \beta_{i,j}$}
    \State $(pk, \mathit{chal}_{i,j}) \gets \beta_{i,j}$
    \State $\mathit{chal}_{i,j+1} \gets \{0,1\}^{256}$ \Comment{Generate next random challenge}
    \State $\beta_{i,j+1} \gets (pk, \mathit{chal}_{i,j+1})$
    \State \Return $\beta_{i,j+1}$
\EndFunction
\end{algorithmic}
\end{algorithm}

Algorithm \ref{alg:mfkdf2_push_factor} specifies the construction for a push notification-based factor. The cryptographic flow is identical to the QR code factor, with the primary difference being the transport mechanism for the challenge. Instead of being displayed as a QR code, the challenge $c_{j,i}$ is sent to the user's registered mobile device via a push notification service. The user approves the request, and their device signs the challenge, returning the signature as the witness $W_{j,i}$, which serves as the source key material $\kappa_{F_j}$.

\subsubsection{Biometrics and Secure Enclaves}
\begin{algorithm}[H]
\caption{Factor Construction for Enclave-Based Factors}
\label{alg:mfkdf2_enclave_factor}
\begin{algorithmic}[1]
\Require Let $\operatorname{Enclave.GenAndBind}$ and $\operatorname{Enclave.Release}$ be TEE API calls.
\Function{Setup}{$\sigma_i$} \Comment{$\sigma_i$ is an identifier for the binding condition}
    \State $\mathit{secretId} \gets \operatorname{Enclave.GenAndBind}(\sigma_i)$ \Comment{Enclave binds new secret to condition}
    \State $\beta_{i,0} \gets \mathit{secretId}$
    \State $\kappa_{i} \gets \bot$
    \State \Return $(\beta_{i,0}, \kappa_{i})$
\EndFunction

\Function{Derive$_1$}{$W_{i,j}, \beta_{i,j}$} \Comment{$W_{i,j}$ is the secret released by the enclave}
    \State $\kappa_{i} \gets W_{i,j}$
    \State \Return $\kappa_{i}$
\EndFunction

\Function{Derive$_2$}{$K', \kappa_{i}, \beta_{i,j}$}
    \State $\beta_{i,j+1} \gets \beta_{i,j}$ \Comment{Enclave-bound secret ID is static}
    \State \Return $\beta_{i,j+1}$
\EndFunction
\end{algorithmic}
\end{algorithm}

Algorithm \ref{alg:mfkdf2_enclave_factor} provides a general construction for factors backed by a Secure Enclave or Trusted Execution Environment (TEE). During \texttt{Setup}, the enclave is instructed to generate a high-entropy secret and cryptographically bind it to a specific condition (e.g., a successful biometric match, or a request from a specific application). The enclave returns a public handle, $\mathit{secretId}$, which is stored as the public state $\beta_j$. To derive, the client requests the secret associated with $\mathit{secretId}$. The enclave enforces the binding condition, and upon success, releases the secret, which serves as the witness $W_{j,i}$ and the source key material $\kappa_{F_j}$.

\subsubsection{Proximity Based Factors}

\begin{algorithm}[htbp]
\caption{Factor Construction for Proximity (NFC/RFID)}
\label{alg:mfkdf2_nfc_factor}
\begin{algorithmic}[1]
\Require Let $\operatorname{HMAC}$ be a secure MAC function (e.g., HMAC-SHA256).
\Function{Setup}{$\sigma_i$} \Comment{$\sigma_i$ is the secret key $K_{tag}$ stored on the tag}
    \State $K_{tag} \gets \sigma_i$
    \State $\mathit{chal}_{i,0} \gets \{0,1\}^{256}$ \Comment{Generate initial random challenge}
    \State $\beta_{i,0} \gets \mathit{chal}_{i,0}$
    \State $\kappa_{i} \gets \bot$
    \State \Return $(\beta_{i,0}, \kappa_{i})$
\EndFunction

\Function{Derive$_1$}{$W_{i,j}, \beta_{i,j}$} \Comment{$W_{i,j}$ is the HMAC response from the tag}
    \State $\mathit{chal}_{i,j} \gets \beta_{i,j}$
    \State \Comment{Client has received $W_{i,j}$ from tag in response to challenge $\mathit{chal}_{i,j}$}
    \State $\kappa_{i} \gets W_{i,j}$
    \State \Return $\kappa_{i}$
\EndFunction

\Function{Derive$_2$}{$K', \kappa_{i}, \beta_{i,j}$}
    \State $\mathit{chal}_{i,j+1} \gets \{0,1\}^{256}$ \Comment{Generate next random challenge}
    \State $\beta_{i,j+1} \gets \mathit{chal}_{i,j+1}$
    \State \Return $\beta_{i,j+1}$
\EndFunction
\end{algorithmic}
\end{algorithm}

Algorithm \ref{alg:mfkdf2_nfc_factor} provides a general construction for proximity-based factors like NFC and RFID. During \texttt{Setup}, a high-entropy secret key is provisioned onto the physical tag, and an initial random challenge is stored in the public state $\beta_j$. To derive, the client (e.g., a smartphone or RFID reader) transmits the current challenge $c_{j,i}$ to the tag. The tag computes a deterministic HMAC of the challenge with its secret key, and transmits the result back. This result serves as the witness $W_{j,i}$ and the source key material $\kappa_{F_j}$. A new challenge is generated for the next iteration to ensure freshness and prevent replay attacks.

\begin{algorithm}[H]
\caption{Factor Construction for OpenID Connect}
\label{alg:mfkdf2_oidc_factor}
\begin{algorithmic}[1]
\Require Let $\operatorname{OIDC.Auth}$ be the OIDC authentication flow.
\Function{Setup}{$\sigma_i$} \Comment{$\sigma_i$ contains provider URL and user's \texttt{sub}}
    \State $(\mathit{issuer}, \mathit{sub}) \gets \sigma_i$
    \State $\mathit{chal}_{i,0} \gets \{0,1\}^{256}$ \Comment{Generate initial random challenge}
    \State $\beta_{i,0} \gets (\mathit{issuer}, \mathit{sub}, \mathit{chal}_{i,0})$
    \State $\kappa_{i} \gets \bot$
    \State \Return $(\beta_{i,0}, \kappa_{i})$
\EndFunction

\Function{Derive$_1$}{$W_{i,j}, \beta_{i,j}$} \Comment{$W_{i,j}$ is the signature of the \texttt{id\_token}}
    \State $(\_ , \_ , \mathit{chal}_{i,j}) \gets \beta_{i,j}$
    \State \Comment{Client verifies the \texttt{id\_token} signature and that it contains $\mathit{chal}_{i,j}$}
    \State $\kappa_{i} \gets W_{i,j}$
    \State \Return $\kappa_{i}$
\EndFunction

\Function{Derive$_2$}{$K', \kappa_{i}, \beta_{i,j}$}
    \State $(\mathit{issuer}, \mathit{sub}, \_) \gets \beta_{i,j}$
    \State $\mathit{chal}_{i,j+1} \gets \{0,1\}^{256}$ \Comment{Generate next random challenge}
    \State $\beta_{i,j+1} \gets (\mathit{issuer}, \mathit{sub}, \mathit{chal}_{i,j+1})$
    \State \Return $\beta_{i,j+1}$
\EndFunction
\end{algorithmic}
\end{algorithm}

Algorithm \ref{alg:mfkdf2_oidc_factor} specifies the construction for an OpenID Connect (OIDC) factor. During \texttt{Setup}, the OIDC provider's details and the user's subject identifier are stored in the public state $\beta_j$ along with a random \texttt{nonce}. To derive, the client initiates an OIDC flow, including the current \texttt{nonce} as a challenge. The provider returns a signed JWT (\texttt{id\_token}) containing the nonce. The signature of this token serves as the witness $W_{j,i}$ and the source key material $\kappa_{F_j}$. A new nonce is generated for the next derivation to ensure freshness.

\iffalse
\subsection{Timing Oracle}

\begin{enumerate}
    \item \textbf{Request (User $\rightarrow$ TO):} The User sends a local MFKDF key $K_L$ derived from a static version of the policy and sends it to the Timing Oracle.

    \item \textbf{Verification and Unbiasing (TO-Side):} The TO uses this key $K_L$ alongside its own secret $S_{TO}$ with a randomness extractor (e.g., $\operatorname{HKDF}$ and the current time $T$ to produce a uniformly random 32-bit:
    \[
        \kappa'_{TOTP} = \operatorname{HMAC}(\operatorname{HKDF}(K_L \odot S_{TO}), T)
    \]
        To mitigate any bias in the online TOTP scheme, this value $\kappa'_{TOTP}$ can be subtracted from the original TOTP value $\kappa_{TOTP}$ prior to generating the TOTP witness. The witness is calculated by truncating both $\operatorname{HMAC}$ values to 31 bits and then performing modulo $10^d$ where $d$ is usually set to 6. \textcolor{red}{The TO has to do more computation on its side so this time and code couldn't be saved be a shoulder-surfing adversary.}

    \item \textbf{Response (TO $\rightarrow$ User):} \textcolor{red}{The TO sends back a key to decrypt the factor secret.}
\end{enumerate}
\fi
% \clearpage
% \newpage
\section{Security Proofs for MFKDF2}
\label{sec:proofs}

This section provides the formal security analysis for the MFKDF2 construction. We prove that MFKDF2 is secure within the Entropy State Transition Modeling Framework (ESTMF) by demonstrating that it satisfies each of the framework's core principles. We begin by proving the security of the foundational mechanisms that mitigate the vulnerabilities identified in NS23.

\subsection{Integrity against Adversarial Flow}
ESTMF requires a scheme to be secure against the flow of malicious information from an adversary into the protocol's logic. MFKDF2 achieves this through a self-referential MAC that guarantees state integrity.

\begin{proposition}
\label{prop:state_integrity}
    Let the MFKDF2 public state be a tuple $(\mathcal{B}, \mathcal{T})$. The protocol is secure against state tampering if a client, after deriving a candidate key $K'$ from $B$, accepts the state as authentic only upon successful verification that $\mathcal{T} = \operatorname{MAC}(K', \mathcal{B})$.
\end{proposition}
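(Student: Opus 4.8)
The plan is to reduce state integrity (Def.~\ref{def:state_integrity}) to the standard EUF-CMA unforgeability of the underlying MAC (HMAC-SHA256), while carefully handling the self-referential twist that the authenticating key is itself a deterministic function of the authenticated message. First I would pin down the adversary's goal: interacting as in Game~$\mathcal{G}$, $\mathcal{A}$ observes the honestly-produced stream $(\mathcal{B}_0,\mathcal{T}_0),\dots,(\mathcal{B}_q,\mathcal{T}_q)$ — all authenticated under a single constant derived key $K$, since by correctness every honest \texttt{Derive} reproduces the same $K$ — and then submits a tampered pair $(\mathcal{B}^*,\mathcal{T}^*)$ not equal to any honest pair. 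The honest client runs \texttt{Derive} on $\mathcal{B}^*$ with its true witnesses, obtains a candidate key $K'$ that is a deterministic function of $\mathcal{B}^*$ and the client's secrets, and accepts iff $\mathcal{T}^* = \operatorname{MAC}(K',\mathcal{B}^*)$. The goal is to bound $\Pr[\text{accept}]$ by a negligible quantity.

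The key structural observation I would stress is that $\mathcal{A}$ never sees any derived key: $K$ (and any $K'$) is precisely the secret the scheme protects, so the tags are the only leakage, and every honest tag is a MAC under the same unknown $K$. This justifies splitting the analysis according to which key the client re-derives from $\mathcal{B}^*$. \textbf{Case A (key preserved):} the tampered $\mathcal{B}^*$ still induces the derived key $K$; since $\mathcal{B}^* \neq \mathcal{B}_i$ for all $i$, acceptance requires a valid tag $\operatorname{MAC}(K,\mathcal{B}^*)$ on a message never queried, so I build a forger $\mathcal{R}$ that runs $\mathcal{A}$, answers each honest-state tag with its own oracle at $K$, and outputs $(\mathcal{B}^*,\mathcal{T}^*)$ as an EUF-CMA forgery, giving a negligible bound. \textbf{Case B (key changed):} $\mathcal{B}^*$ induces a fresh key $K' \neq K$; because $K'$ is a pseudorandom KDF output independent of $\mathcal{A}$'s view and $\mathcal{A}$ has obtained no tag under $K'$, predicting $\operatorname{MAC}(K',\mathcal{B}^*)$ is a no-query forgery against an effectively uniform key and succeeds with probability at most $2^{-\tau} + \varepsilon_{\mathrm{MAC}}$ for tag length $\tau$. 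Summing the two cases bounds the overall acceptance probability by a negligible quantity, establishing integrity.

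I expect the main obstacle to be making the case split rigorous, which is exactly where the self-referential construction bites: I cannot invoke EUF-CMA directly because the submitted message $\mathcal{B}^*$ itself partly determines the key $K'$ used to check it. The resolution is to argue that the re-derived key is a well-defined function of $\mathcal{B}^*$ fixed before the tag is tested, so conditioning on ``$K' = K$ versus $K' \neq K$'' partitions the event space cleanly; in the first branch the key is constant across all relevant messages and EUF-CMA applies verbatim, while in the second branch the adversary is forced to forge under a key for which it has never obtained a single tag. I would also note that the ``rational client'' hypothesis enters only through the requirement that the client actually executes the verification and aborts on mismatch: since the tag check is logically downstream of \texttt{Derive}, no successor state is emitted on failure, which is what prevents a tampered $\mathcal{B}^*$ from ever propagating into a new authenticated state.
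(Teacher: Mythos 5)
Your proposal is correct and rests on the same cryptographic core as the paper's proof: a reduction to the EUF-CMA security of the MAC. The difference is in structure and rigor. The paper's argument is a one-shot claim about a single authentic pair $(\mathcal{B},\mathcal{T})$ --- it asserts that a tampered $(\mathcal{B}',\mathcal{T}')$ passing verification would contradict EUF-CMA, ``even if $K'=K$,'' and in fact garbles the winning condition by saying the attacker must achieve $\mathcal{T}'=\mathcal{T}$ rather than $\mathcal{T}'=\operatorname{MAC}(K',\mathcal{B}')$ for the key $K'$ re-derived from $\mathcal{B}'$. Your write-up supplies exactly what that sketch omits: the observation that all honest tags in the stream $(\mathcal{B}_0,\mathcal{T}_0),\dots,(\mathcal{B}_q,\mathcal{T}_q)$ live under the single key $K$, and the clean partition into Case A ($K'=K$), where the honest stream maps onto the forger's oracle queries and $(\mathcal{B}^*,\mathcal{T}^*)$ is a textbook EUF-CMA forgery, and Case B ($K'\neq K$), where the adversary must forge under a key for which it holds zero tags. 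This case split is what makes the self-referential check sound, and the paper only gestures at it with its parenthetical ``even if $K'=K$.'' One caveat you should make explicit in Case B: the claim that $K'$ is pseudorandom and independent of the adversary's view is not automatic, since $K'$ is a KDF output on inputs partially chosen by the adversary (the tampered $\mathcal{B}^*$) mixed with the client's secret witnesses; it requires that the factors retain entropy and that the KDF behaves as a PRF --- the same hypotheses the paper invokes only later, in Thm.~\ref{thm:mfkdf2_achieves_msi}. With that assumption stated, your proof is strictly more complete than the one in the paper.
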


\begin{proof}
    Denote the authentic state by $(\mathcal{B}, \mathcal{T})$ with $\mathcal{T} = \operatorname{MAC}(K, \mathcal{B})$ for the correct key $K$. Let an attacker produce $\mathcal{B}' \neq \mathcal{B}$, but note that the attacker must produce a tuple $(\mathcal{B}', \mathcal{T}')$ such that $\mathcal{T}'=\mathcal{T}$ which implies $\operatorname{MAC}(K', \mathcal{B}')=\mathcal{T}$ for some $K'$. However, due to the \emph{existential unforgeability under a chosen-message attack (EUF-CMA)} property of a secure MAC, the probability of $\mathcal{T}'=\mathcal{T}$ is negligible even if $K'=K$, therefore the client will reject the forged state $\mathcal{B}$ almost surely.
\end{proof}

\subsection{Sealing Entropy Flow from Secrets}
ESTMF requires that the public state does not leak information about the secrets it protects. This is achieved by using secure cryptographic primitives. The following proofs establish that MFKDF2's use of a Pseudorandom Permutation (PRP) prevents the key recovery and two-time pad vulnerabilities found in NS23.

\begin{theorem}
\label{thm:prp_kpa_security}
    If a block cipher $E$ is a secure Pseudorandom Permutation (PRP), then it is secure against key recovery from known plaintext-ciphertext pairs.
\end{theorem}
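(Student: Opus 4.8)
The plan is to establish this by a standard reduction proving the contrapositive: any PPT adversary $\mathcal{A}$ that recovers the key from known plaintext-ciphertext pairs with non-negligible probability $\epsilon$ can be converted into a PPT distinguisher $\mathcal{D}$ that breaks the PRP security of $E$. First I would recall the PRP game, in which $\mathcal{D}$ is given oracle access to a box that is either $E_k$ for a uniformly random key $k$ (the ``real'' world) or a truly random permutation $\pi$ (the ``ideal'' world), and must guess which.

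The reduction proceeds as follows. The distinguisher $\mathcal{D}$ queries its oracle on a polynomial number of distinct plaintexts $m_1, \dots, m_q$, obtaining pairs $(m_i, y_i)$, and forwards these to $\mathcal{A}$, which returns a candidate key $k'$. Crucially, $\mathcal{D}$ then queries its oracle on one additional \emph{fresh} plaintext $m^*$ (distinct from all $m_i$) to obtain $y^*$, and outputs the guess ``real'' if and only if $E_{k'}(m^*) = y^*$.

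The analysis then splits by world. In the real world the oracle is $E_k$, so whenever $\mathcal{A}$ recovers the true key (an event of probability at least $\epsilon$) the check $E_{k'}(m^*) = y^* = E_k(m^*)$ passes, and $\mathcal{D}$ outputs ``real'' with probability at least $\epsilon$. In the ideal world the oracle is a random permutation independent of any key, so $y^* = \pi(m^*)$ is uniform over the values not yet returned and independent of the candidate $k'$ computed by $\mathcal{A}$ from the other pairs; hence the spurious-acceptance probability $\Pr[E_{k'}(m^*) = y^*]$ is at most $1/(2^n - q)$, which is negligible for block length $n$. Subtracting, $\mathcal{D}$ achieves distinguishing advantage at least $\epsilon - 1/(2^n - q)$, contradicting the assumed PRP security unless $\epsilon$ is itself negligible.

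The main obstacle is establishing soundness of the verification step in the ideal world: one must rule out that $\mathcal{A}$ emits a wrong-but-consistent key that happens to agree with the random permutation. Verifying on a fresh point $m^*$ (rather than on one of the pairs already shown to $\mathcal{A}$) is precisely what forces independence between $\pi(m^*)$ and $k'$, collapsing the false-positive probability to the bound $1/(2^n-q)$ and disposing of any key-collision concerns automatically. The remaining work is bookkeeping: confirming that $\mathcal{D}$ runs in polynomial time and that $q$ is polynomial, so the ideal-world bound stays negligible.
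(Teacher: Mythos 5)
Your proposal is correct and follows the same approach as the paper, which simply cites this as a standard result and sketches the same reduction (a key-recovery adversary is used as a subroutine to distinguish the PRP from a truly random permutation). Your write-up fills in the details the paper omits---in particular the verification on a fresh point $m^*$ and the explicit advantage bound $\epsilon - 1/(2^n - q)$---which is exactly how the textbook argument goes.
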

\begin{proof}
    This is a standard result in symmetric-key cryptography \cite[Ch. 3]{BonehShoup2020}. If an adversary could recover the key from a known plaintext-ciphertext pair with non-negligible advantage, they could be used as a subroutine to distinguish the PRP from a truly random permutation, thus breaking its fundamental security definition.
\end{proof}

\begin{theorem}
\label{thm:prp_share_encryption}
    Let the MFKDF2 thresholding scheme use a secure PRP $E$ to encrypt a share $s$ (or factor secret $\sigma$, resp.). An adversary observing two ciphertexts of the same share under different share keys, $c_{s} = E_{k}(s)$ (resp. $c_{\sigma} = E_{k}(\sigma)$) and $c'_{s} = E_{k'}(s)$ (resp. $c_{\sigma} = E_{k}(\sigma)$), gains no advantage in recovering $s$ (resp. $\sigma$), $k$, or $k'$.
\end{theorem}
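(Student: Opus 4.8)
The plan is to reduce the statement directly to the PRP security definition via a two-step game-hopping argument, exploiting the independence of the two share keys $k$ and $k'$. First I would fix the real experiment, in which the adversary's entire relevant view is the ciphertext pair $(c_s, c'_s) = (E_k(s), E_{k'}(s))$ for a fixed but unknown plaintext $s$, together with whatever else the public state exposes. The goal is to show this view is computationally independent of $s$ (and of $k, k'$), so that no recovery strategy beats guessing.

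Next I would introduce an idealized experiment in which $E_k(\cdot)$ is replaced by a truly random permutation $\pi_1$ and $E_{k'}(\cdot)$ by an independent truly random permutation $\pi_2$. Moving from the real to the ideal world proceeds through two hybrids, swapping one keyed permutation at a time. In the first hop a distinguisher $\mathcal{D}$ plays the PRP game for $k$, samples $k'$ itself, answers its single query on $s$ from its oracle to obtain $c_s$, computes $c'_s = E_{k'}(s)$ honestly, and forwards $(c_s, c'_s)$ to $\mathcal{A}$; this is possible precisely because $k'$ is independent of $k$. The second hop is symmetric, using $\mathcal{D}$'s oracle for $k'$ while sampling $\pi_1(s)$ directly. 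Each hop costs at most $\text{Adv}^{\text{PRP}}_{\mathcal{D}}(\lambda)$, so the total gap between the real and ideal views is at most twice the (negligible) PRP advantage.

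Then I would argue that in the ideal experiment the ciphertext pair is information-theoretically independent of $s$. Since each permutation is queried at the single point $s$, the value $\pi_1(s)$ is uniform over the block space and $\pi_2(s)$ is uniform and independent of it; hence the joint distribution of $(\pi_1(s), \pi_2(s))$ is identical for every choice of $s$. Consequently the best any adversary can do to output $s$ is to guess, succeeding with probability at most $2^{-|s|}$, while $k$ and $k'$ are not carried by $\pi_1, \pi_2$ at all and so remain unconditionally hidden. Transferring back through the hybrid bound shows the real-world advantage in recovering $s$, $k$, or $k'$ exceeds pure guessing by at most the negligible PRP distinguishing advantage. The identical argument applies verbatim with $\sigma$ in place of $s$, since factor secrets are encrypted with the same PRP.

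The main obstacle I anticipate is not the reduction but justifying the independence of $k$ and $k'$, on which the two-key hybrid silently relies. In MFKDF2 each share key is derived as $k_i = \operatorname{HKDF}(\kappa_i)$, so across the two observed ciphertexts I must ensure the effective keys are genuinely independent rather than equal; if re-encryption after a state update reused the same $\kappa_i$, the two deterministic encryptions would collapse to a single ciphertext and the ``two-time'' framing would be vacuous. I would therefore state the independence of $k$ and $k'$ as an explicit hypothesis and note that it is discharged in practice by the stateful share regeneration of Thm.~\ref{thm:share_updating}, cleanly separating this primitive-level claim from the protocol-level forward-secrecy claim proved there.
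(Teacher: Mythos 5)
Your argument is internally sound, and for the case it covers it is tighter than the paper's own proof: the two-hop hybrid to independent random permutations, the single-query observation that $(\pi_1(s), \pi_2(s))$ is uniform and independent of $s$, and the explicit flagging of the independence of $k$ and $k'$ are all correct, and cleaner than the paper's bare assertion that ``the adversary's best strategy is an exhaustive search.'' However, there is a genuine gap relative to the setting the theorem is actually about. The paper's proof opens by fixing the share keys to be derived from \emph{independent computational $m$-entropy factors}, i.e., $k = \operatorname{HKDF}(\kappa)$ where the factor material $\kappa$ may have entropy $m$ far below the PRP key length (a TOTP-style factor contributes roughly 20 bits). Your reduction cannot be instantiated in that regime: in your first hop, the PRP challenger's oracle key is uniform over the full key space, while the scheme's key is not, so neither hybrid world matches the real experiment and the PRP advantage term bounds nothing. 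Indeed, your final bound (twice the PRP advantage plus $2^{-|s|}$) is simply false for small $m$: the adversary can run the joint brute force over the $2^m \times 2^m$ factor-derived key pairs, checking $E^{-1}_{k^*}(c_s) \stackrel{?}{=} E^{-1}_{k'^*}(c'_s)$, and recover $s$, $k$, and $k'$ with certainty in $2^{2m}$ work (in fact a meet-in-the-middle variant needs only $2^m$ time with $2^m$ memory, so even the paper's stated complexity is optimistic).

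This is why the paper's conclusion is phrased as it is: ``gains no advantage'' there means no advantage \emph{beyond the joint key-pair search} --- the PRP forces the two factors to be attacked as a product space rather than individually, preserving MFKDF's multiplicative-entropy property --- not that the advantage is negligible in the absolute sense your proof establishes. To repair your proposal you would either (a) add the hypothesis that the share keys are computationally indistinguishable from uniform full-length keys, in which case your proof goes through essentially verbatim but no longer speaks to the low-entropy factors the theorem is meant to protect, or (b) recast the conclusion relative to the generic attack: any adversary beating the joint $2^{2m}$ search yields a distinguisher against $E$ under the (non-standard) low-entropy key distribution. You correctly identified independence of $k$ and $k'$ as a hidden hypothesis --- the paper needs it too, and your observation about degenerate key reuse after recovery is a good one --- but the key \emph{distribution}, not key independence, is the step where your reduction and the paper's setting part ways.
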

\begin{proof}
    Let the share keys $k$ and $k'$ be derived from independent computational $m$-entropy factors. An adversary is given the two ciphertexts $c_{s}$ and $c'_{s}$, which without loss of generality may be encrypted factor shares $s$ or factor secrets $\sigma$. Since $E$ is a secure PRP, its outputs are computationally indistinguishable from random permutations, revealing no information about the plaintext $s_i$ or the keys $k$ or $k'$.

    Therefore, the adversary's best strategy is an exhaustive search. To find the secrets, the adversary must iterate through all $2^m \times 2^m = 2^{2m}$ possible pairs of keys $(k^*, k'^*)$ and check if the decryptions match: $E^{-1}_{k^*}(c_{s}) \stackrel{?}{=} E^{-1}_{k'^*}(c'_{s})$. The $\mathcal{O}(2^{2m})$ complexity of this search is computationally infeasible for any factor with non-trivial entropy. Thus, the adversary's advantage is negligible.
\end{proof}

\subsection{Sealing Entropy Flow Between Factors}
ESTMF requires that factors be cryptographically isolated to prevent collateral damage from a partial compromise. MFKDF2 achieves this through a combination of protocol logic and cryptographic separation.

\begin{theorem}
\label{thm:share_updating}
    Let the \texttt{Recover} operation in an MFKDF thresholding scheme be such that it generates a fresh set of shares $\{s'_i\}$ for the new state $\mathcal{B}'$. An adversary who compromises the source key material of an old factor $F_i$ after recovery gains negligible advantage in recovering any valid share $s'_i$ from the current state $\mathcal{B}'$.
\end{theorem}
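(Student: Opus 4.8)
The plan is to establish forward secrecy by showing that the \texttt{Recover} operation fully decouples the post-recovery shares from any information an adversary could have obtained about the pre-recovery state. I would model the adversary's view as consisting of (i) the complete secret material of the old factor $F_i$, and hence its derived share key $k_i = \operatorname{HKDF}(\kappa_i)$ and the old decrypted share $s_i$, together with (ii) the entire new public state $\mathcal{B}'$, which contains the freshly re-encrypted ciphertexts $\{c'_j\}$. The goal is to argue that this combined view is computationally independent of every new share $s'_i$, so that any extractor outputs a valid new share only with negligible probability. Throughout, I take the defining hypothesis of the theorem at face value: \texttt{Recover} re-runs Shamir's Secret Sharing on the same master secret $M$ with freshly sampled polynomial coefficients, and the compromised factor is replaced by an independent factor $F'_i$ supplying a new encryption key.

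First I would isolate the information-theoretic layer. By hypothesis the new sharing $\{s'_j\} = \operatorname{Share}(M, t, n)$ is produced from a new polynomial $p'$ whose non-constant coefficients are sampled independently of everything in the prior state. Invoking Lemma \ref{lem:sss_indistinguishability}, any set of fewer than $t$ evaluation points of $p'$ is uniformly distributed over $\operatorname{GF}(256)$ and statistically independent of the old sharing $\{s_j\}$. In particular, the adversary's knowledge of the old share $s_i = p(i)$ yields zero information about the new share $s'_i = p'(i)$, since $s_i$ is simply not a point on $p'$.

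Next I would seal the encryption layer. Because $F_i$ is replaced during recovery, the new share $s'_i$ is encrypted under a fresh key $k'_i = \operatorname{HKDF}(\kappa'_i)$ drawn from the replacement factor, independent of the compromised $k_i$. By the PRP security established in Theorem \ref{thm:prp_share_encryption}, the ciphertext $c'_i$ is computationally indistinguishable from a random permutation output and leaks nothing about $s'_i$ to any adversary not holding $k'_i$; the retained old key $k_i$ is useless against it. Combining the two layers via a standard hybrid argument, the adversary's total advantage is bounded by the PRP-distinguishing advantage plus the information-theoretic guessing bound from SSS, both of which are negligible.

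The main obstacle I anticipate is handling the adversary who has \emph{also} compromised up to $t-1$ other current factors, so that I must rule out any way of recycling the stale information across the threshold. The crux is that $s_i$ lies on the previous polynomial $p$ and cannot substitute for a missing point of $p'$, so the adversary effectively still possesses at most $t-1$ valid evaluations of $p'$. By the threshold property of SSS, the master secret—and hence every unrevealed new share—remains information-theoretically hidden, which is precisely what prevents the old factor from serving as a residual recovery pathway. Making this reduction fully rigorous against an adaptive adversary who interleaves compromises with observations of the evolving public state is where I expect the bulk of the care to be required.
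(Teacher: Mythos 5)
Your proof is correct, and its central step is the same as the paper's: \texttt{Recover} resamples the Shamir polynomial, so the old share $s_i$ is a point on the discarded polynomial $p$ and carries no information about any evaluation of the fresh polynomial $p'$. The paper's proof essentially stops there --- it notes that the adversary learns $k_i$ and hence $s_i$ from the old state, observes that the new shares come from a freshly sampled polynomial, and concludes computational independence. What you add, and what the paper leaves implicit, are two further layers: (i) an explicit treatment of the adversary's view of the \emph{new} public state, arguing via Theorem \ref{thm:prp_share_encryption} that the re-encrypted ciphertexts $\{c'_j\}$ leak nothing because the replaced factor's share is encrypted under an independent key $k'_i$ (and the unchanged factors' keys remain unknown to the adversary), combined through a hybrid argument; and (ii) the collusion case, where the adversary also holds up to $t-1$ current factors and you argue that the stale point $s_i$ cannot be recycled as a $t$-th evaluation of $p'$, so the threshold property keeps every unrevealed share hidden. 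Your invocation of Lemma \ref{lem:sss_indistinguishability} slightly overstates what that lemma proves (it establishes uniformity of a single share; the joint uniformity of any $t-1$ shares is the standard perfect-secrecy property of SSS, which you should cite as such), but this is cosmetic. On balance your argument is a strictly more complete version of the paper's: the paper's one-layer proof buys brevity, while your decomposition actually closes the question of what the adversary can do with the ciphertexts in $\mathcal{B}'$ and with partial coalitions, which the theorem statement arguably requires.
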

\begin{proof}
    Let an adversary compromise the old factor $F_i$ and thus learn its share key $k_i$. The adversary's knowledge of $k_i$ allows for the recovery of the old share $s_i$ from the old state $\mathcal{B}$.

    Since \texttt{Recover} operation generates a new set of shares $\{s'_i\}$ from a new, randomly chosen secret-sharing polynomial. Since the polynomial is randomly generated from a secure PRNG, the old share $s_i$ is computationally independent of any new share $s'_i$ in the current state $\mathcal{B}'$. Therefore, the adversary's knowledge of the compromised factor's secrets ($k_i$ and $s_i$) provides no advantage in determining any of the new shares $\{s'_i\}$, and the attacker has only an advantage in key derivation for the already partially-compromised state $\mathcal{B}$.
\end{proof}

\begin{theorem}
\label{thm:factor_independence_n_of_n}
    The MFKDF2 n-of-n construction, instantiated with factors satisfying the Factor-KI property is Factor-IND-CMA secure.
\end{theorem}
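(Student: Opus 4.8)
The plan is to prove Thm.~\ref{thm:factor_independence_n_of_n} by reducing the Factor-IND-CMA game (Def.~\ref{def:factor_independence_game}) to the Factor-KI game (Def.~\ref{def:leak_free_factor}) through a short sequence of hybrids. The central observation is that in the $n$-of-$n$ construction (Alg.~\ref{alg:mfkdf2}) the only adversary-visible quantities that depend on the active challenge factor are (i) its own public state $\beta_{\text{ch}}$, and (ii) the derived key $K = \mathit{KDF}(M, \mathit{salt}_K)$, which in turn feeds every MAC tag $\mathcal{T} = \mathit{MAC}(K, \mathcal{B})$ and every other factor's key-feedback update $\beta_{i,j+1} \leftarrow F_i.\texttt{Derive}_2(K, \kappa_i, \beta_{i,j})$. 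The proof must therefore show that switching from $F_A$ to $F_B$ perturbs each of these channels only in a computationally undetectable way.

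First I would argue that $K$ is pseudorandom from the adversary's view in both worlds. Since $F_A$ and $F_B$ are the uncompromised challenge factors, the adversary never learns $\kappa_A$ or $\kappa_B$; modeling $\operatorname{Hash}(\cdot)$ as a random oracle (or assuming its salted output is pseudorandom on high-entropy input), the challenge factor's summand $\operatorname{Hash}(\kappa_{\text{ch}} \,\|\, \mathit{salt}_{\text{ch}})$ is an independent uniform mask, so $M$ and hence $K$ are pseudorandom and unknown to $\mathcal{A}$. This is precisely what the mandatory per-factor salt buys: it binds each factor's contribution to a unique position so the two challenge factors contribute independent masks to $M$.

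The hybrid argument then proceeds in two symmetric steps. In the first step I replace the real derived key $K$ by a fresh uniform key $\tilde{K}$ wherever it is consumed, i.e.\ in all key-feedback calls and in the MAC. Indistinguishability of this step follows by applying Factor-KI (Def.~\ref{def:leak_free_factor}) to each factor construction in turn---including the compromised factors, for which the adversary knows $\kappa_i$ but, by the Factor-KI guarantee, still cannot detect the feedback key---together with the PRF security of $\mathit{MAC}$, which renders the tags pseudorandom regardless of the unknown key. State integrity (Prop.~\ref{prop:state_integrity}) ensures that any tampered state injected by the malicious server is rejected identically in both worlds, so the client's accept/reject pattern is not itself a distinguisher. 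After this step the feedback and MAC keys are a common random $\tilde{K}$ in both the $b=0$ and $b=1$ executions, so the \emph{only} surviving difference is whether the challenge factor's public state is produced by a fresh setup of $F_A$ or of $F_B$. Because the two share a type and are set up independently under the same random feedback key, their public-state streams are identically distributed, collapsing the two central hybrids; a final symmetric application of Factor-KI and PRF security re-installs the real key $K_B$ on the $b=1$ side, completing the chain.

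I expect the main obstacle to be the first hybrid step, where $K$ must be swapped for a random key \emph{simultaneously across all factors}: Factor-KI is stated per factor, so the reduction has to be threaded through each construction while the remaining factors are simulated using the compromised secrets $\{\kappa_i\}_{i \in \mathcal{F}_{\text{compromised}}}$ and the externally supplied challenge-factor public state. Care is also required to confirm that the challenge factor's own public state reveals nothing about $\kappa_{\text{ch}}$ beyond what Factor-KI already seals; for factor types whose setup embeds a static, secret-derived identifier into $\beta$ this must be checked explicitly, as such a factor would fail Factor-IND-CMA and therefore fall outside the hypothesis that all factors satisfy Factor-KI.
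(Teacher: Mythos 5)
Your proof is correct and rests on the same three pillars as the paper's own argument: Prop.~\ref{prop:state_integrity} to neutralize tampered or replayed states, per-factor salting to bind each factor's contribution to its position in the combined secret, and Factor-KI (Def.~\ref{def:leak_free_factor}) to render the factor state streams indistinguishable. Where you differ is in the decomposition, and your version is the more rigorous one. The paper's proof asserts that, once tampering is excluded, ``the only difference in the adversary's view between these two worlds is the public state of the challenge factor, $\beta_A$ vs.\ $\beta_B$,'' and concludes directly from Factor-KI. That claim is not literally true: the derived key $K$ also differs across the two worlds, and $K$ is consumed by every MAC tag $\mathcal{T} = \mathit{MAC}(K,\mathcal{B})$ and by every other factor's $\texttt{Derive}_2$ key-feedback call in Alg.~\ref{alg:mfkdf2}, so these are additional channels through which the challenge bit could leak. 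Your hybrid step---arguing $K$ is pseudorandom because the challenge factor's salted hash is an unknown high-entropy summand, then swapping $K$ for a uniform $\tilde{K}$ in all MAC and key-feedback computations via Factor-KI and the PRF security of the MAC, and finally collapsing the central hybrids because the two challenge factors' setups are identically distributed under a common random feedback key---is exactly what is needed to make the paper's one-line claim sound. The cost is the extra debts you honestly flag: you need pseudorandomness of $\operatorname{Hash}$ on salted high-entropy input (or a random oracle), and you must apply Factor-KI to compromised factors whose $\kappa_i$ the adversary already knows, which stretches Def.~\ref{def:leak_free_factor} slightly beyond its literal statement, since that game never hands the adversary the factor's secret. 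The paper's proof implicitly incurs the same assumptions; it simply never surfaces them.
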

\begin{proof}
    We prove that for any PPT adversary $\mathcal{A}$, its advantage in the Factor-IND-CMA (Def. \ref{def:factor_independence_game}) game is negligible. First, we note that any attempt by the adversary to provide a tampered or replayed state will be rejected by the client's self-referential MAC check (Prop. \ref{prop:state_integrity}). Thus, the adversary is reduced to a passive observer of a valid state stream.

    The adversary's goal is to distinguish between two worlds: one using challenge factor $F_A$, the other $F_B$. The only difference in the adversary's view between these two worlds is the public state of the challenge factor, $\beta_{A}$ vs. $\beta_{B}$. The master secret $\kappa$ is formed by concatenating the salted hashes of all factor source keys. Since concatenation is non-commutative and each factor is salted, the contribution of the challenge factor is bound to its specific position in the policy. Since the factors are assumed to be Factor-KI secure, the adversary's view is computationally indistinguishable in both worlds. Therefore, its advantage is negligible.
\end{proof}

\begin{lemma}
\label{lem:sss_indistinguishability}
    Let a $b$-bit secret be shared using a $(t, n)$ Shamir's Secret Sharing (SSS) scheme performed bytewise over the finite field $F = \operatorname{GF}(2^q)$, where $b$ is a multiple of $q$ and $n < 2^q$. Any resulting share $s_i$ is computationally indistinguishable from a uniformly random $b$-bit string.
\end{lemma}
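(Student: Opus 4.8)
The plan is to prove the lemma by reducing the $b$-bit case to the single-chunk ($q$-bit) case, and then establishing perfect secrecy for a single Shamir share over $\operatorname{GF}(2^q)$. Since $b$ is a multiple of $q$, the $b$-bit secret decomposes into $b/q$ independent $q$-bit chunks, each shared by its own independent SSS instance (independent random polynomials). A share $s_i$ is then the concatenation of the $i$-th share from each chunk. If I can show each chunk's share is *perfectly* uniform over $\operatorname{GF}(2^q)$ and independent across chunks, then the concatenation is uniform over $(\operatorname{GF}(2^q))^{b/q} \cong \{0,1\}^b$, which gives the stronger information-theoretic conclusion (and hence computational indistinguishability a fortiori).

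The key steps, in order, are as follows. First, I would fix a single $q$-bit chunk and recall the Shamir construction: a random degree-$(t-1)$ polynomial $f(x) = a_0 + a_1 x + \cdots + a_{t-1}x^{t-1}$ over $F = \operatorname{GF}(2^q)$ with $a_0$ the secret and $a_1,\dots,a_{t-1}$ chosen uniformly at random, with share $s_i = f(\alpha_i)$ for distinct nonzero evaluation points $\alpha_i$. Second, I would argue that for any fixed nonzero $\alpha_i$, the map $(a_1,\dots,a_{t-1}) \mapsto f(\alpha_i)$ is an affine function whose linear part is surjective onto $F$ (indeed, fixing $a_2,\dots,a_{t-1}$, the coefficient of $a_1$ is $\alpha_i \neq 0$, a unit in the field), so that as the $a_j$ range uniformly, $s_i$ is uniformly distributed over $F$ regardless of the secret $a_0$ — this is the core perfect-secrecy argument. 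Third, I would note this requires distinct valid evaluation points, which is exactly why $n < 2^q$ is assumed: there must be at least $n$ distinct nonzero elements in $F$ to assign to the $n$ shareholders. Fourth, I would invoke independence of the polynomials across the $b/q$ chunks to conclude the full $b$-bit share is uniform, and finally observe that perfect uniformity trivially implies computational indistinguishability from uniform for any PPT adversary.

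\textbf{The main obstacle} I anticipate is not the single-share uniformity — that is the standard textbook fact — but rather being precise about \emph{what an adversary observes in the ESTMF setting}. The lemma as stated concerns a single share $s_i$ in isolation, but the honest statement of Shamir secrecy is that \emph{any set of fewer than $t$ shares} is jointly uniform and independent; the single-share claim is the $|C|=1 < t$ special case. I would want to make clear that the indistinguishability holds for the adversary's actual view, which in the threshold construction (Alg.~\ref{alg:mfkdf2_threshold}) consists of \emph{encrypted} shares $c_i = \operatorname{PRP}_{k_i}(s_i)$, and that when the adversary compromises up to $t-1$ factors the recovered plaintext shares are still jointly uniform. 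The cleanest route is to prove the stronger joint statement for any $t-1$ shares via the Lagrange-interpolation bijection between the $t-1$ revealed shares (plus the unknown secret $a_0$) and the $t-1$ free coefficients, and then specialize to the single-share claim as written. I would flag that the condition $n < 2^q$ guarantees enough evaluation points but also note the sharper bound $n \le 2^q - 1$ consistent with the $255$-share remark for $\operatorname{GF}(256)$ cited earlier.
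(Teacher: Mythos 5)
Your proposal is correct and follows essentially the same route as the paper's own proof: decompose the $b$-bit secret into $b/q$ independent $q$-bit chunks, show each share-element $f_k(x_i)$ is perfectly uniform over $\operatorname{GF}(2^q)$ because of the uniformly random higher coefficients, and conclude that the concatenation is uniform over $\{0,1\}^b$, which implies (indeed exceeds) computational indistinguishability. Your affine-map argument (the coefficient of $a_1$ is the unit $\alpha_i \neq 0$) is in fact a more careful justification of the step the paper states only loosely as ``since at least one coefficient of $f_k(x)$ is uniformly random, the value of $f_k(x_i)$ is uniformly random.''
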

\begin{proof}
    We prove this by showing that the statistical distribution of a share is identical to the uniform distribution over $\{0,1\}^b$. An adversary's advantage in any distinguishing game is therefore zero.

    The SSS construction operates on each $q$-bit element $S_k \in F$ of the secret independently by constructing a random polynomial $f_k(x)$ of degree $t-1$ such that $f_k(0) = S_k$. The remaining $t-1$ coefficients are chosen independently and uniformly at random from $F$. A share-element $s_{i,k}$ is the evaluation $f_k(x_i)$ at a distinct, non-zero public point $x_i$. Since at least one coefficient of $f_k(x)$ is uniformly random (for any $t>1$), the value of $f_k(x_i)$ is a uniformly random element of $F$.

    The final $i$-th share, $s_i$, is the concatenation of these independent and uniformly random $q$-bit share-elements: $s_i = s_{i,1} \ || \ \dots \ || \ s_{i,k}$. The concatenation of $k$ independent, uniformly random $q$-bit strings is itself a uniformly random $b$-bit string. Since the distribution of a real share is identical to the uniform distribution, no distinguisher can gain an advantage. 
\end{proof}

\begin{theorem}
\label{thm:factor_independence_threshold}
    The MFKDF2 threshold construction, instantiated with factors satisfying the Factor-KI property is Factor-IND-CMA secure.
\end{theorem}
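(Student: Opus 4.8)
The plan is to mirror the structure of the $n$-of-$n$ proof (Thm. \ref{thm:factor_independence_n_of_n}), but replace the concatenation step with the secret-sharing machinery and add hybrids to absorb the encrypted shares. First I would invoke Prop. \ref{prop:state_integrity} to reduce the active, malicious-server adversary to a passive observer of a valid state stream: any tampered state fails the self-referential MAC check, so the only admissible transcripts are honestly generated ones. This lets me fix attention on the joint distribution of the public transcript in the two worlds $b=0$ (using $F_A$) and $b=1$ (using $F_B$).

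Next I would decompose the adversary's view into the parts that are identical across both worlds and the parts that depend on $b$. The policy header $(t,n,\mathit{salt}_K)$, the public states and encrypted shares of all non-challenge factors, and the complete secret material of the compromised set $\mathcal{F}_{\text{compromised}}$ are all independent of $b$ and can be simulated identically. The only $b$-dependent quantities are (i) the public-state stream of the challenge factor, $\{\beta_{A,j}\}$ versus $\{\beta_{B,j}\}$, and (ii) its single encrypted share $c = \mathit{PRP}_{k}(s)$, where the share value $s = f(x_{\text{challenge}})$ is fixed by the ($b$-independent) evaluation point while the key $k \in \{k_A, k_B\}$ is derived from the challenge factor's source key material.

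I would then bound the advantage with a two-step hybrid. The first hybrid replaces $\{\beta_{A,j}\}$ with $\{\beta_{B,j}\}$; since both challenge factors satisfy Factor-KI (Def. \ref{def:leak_free_factor}), their public-state streams are computationally indistinguishable, so this swap alters the adversary's advantage only negligibly. The second hybrid replaces the encrypted share $c$: here I first apply Lem. \ref{lem:sss_indistinguishability} to argue that $s$ is distributed uniformly over $\{0,1\}^{256}$, then invoke PRP security (Thm. \ref{thm:prp_kpa_security} and Thm. \ref{thm:prp_share_encryption}) to conclude that $\mathit{PRP}_{k_A}(s)$ and $\mathit{PRP}_{k_B}(s)$ are each indistinguishable from a random string, hence from one another. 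Chaining the hybrids yields $\text{Adv}_{\mathcal{A}}^{\text{Factor-IND-CMA}}(\lambda) \le \text{negl}(\lambda)$.

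The hard part will be rigorously handling the compromised factors in the presence of the shared secret-sharing polynomial. Because all shares lie on a single polynomial $f$ with $f(0)=M$, the compromised shares are correlated with the challenge share, so I cannot naively claim the challenge share is independent of the adversary's view. The clean resolution is that $s = f(x_{\text{challenge}})$ takes the \emph{same value} in both worlds (the evaluation point and polynomial are $b$-independent under the natural coupling), so the correlation structure is identical on both sides and cancels in the hybrid; the only $b$-dependence is funneled through the encryption key $k$, which is precisely where PRP security applies. I would state this invariance carefully so that the SSS reduction and the PRP reduction compose without the compromised shares opening a side channel. Given Factor-KI, Lem. \ref{lem:sss_indistinguishability}, and PRP security, the threshold result follows, and the policy-based construction inherits it by treating each allowable combination as an instance of the threshold argument.
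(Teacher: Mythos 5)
Your proposal is correct and follows the same overall route as the paper's proof: both reduce the active malicious-server adversary to a passive observer via the self-referential MAC (Prop.~\ref{prop:state_integrity}, inherited through Thm.~\ref{thm:factor_independence_n_of_n}), isolate the $b$-dependent components of the view as the challenge factor's public state stream and its encrypted share, and then dispatch these with Factor-KI, Lemma~\ref{lem:sss_indistinguishability}, and PRP security respectively. The genuine difference is that you are more careful on a point the paper glosses over: the paper invokes Lemma~\ref{lem:sss_indistinguishability} to declare the challenge shares ``random and unbiased,'' but that lemma only gives the \emph{marginal} distribution of a single share, whereas in this game the adversary holds the secret material of $\mathcal{F}_{\text{compromised}}$ and can decrypt those factors' shares, which are correlated with the challenge share through the common polynomial (indeed, if $|\mathcal{F}_{\text{compromised}}| \ge t$ the challenge share is fully determined by the adversary's knowledge). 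Your coupling argument---fixing the polynomial and evaluation point across both worlds so that the share value $s$ is identical and the only $b$-dependence is funneled into the encryption key $k_A$ versus $k_B$---closes this gap cleanly, since ciphertexts of a possibly known plaintext under two identically distributed, unknown keys are indistinguishable by PRP security (together with Factor-KI, which guarantees the keys themselves are not leaked by the factor states). The paper reaches the same conclusion but would need your refinement to be fully rigorous when many non-challenge factors are corrupted; your explicit two-hybrid structure also makes the composition of the Factor-KI step and the PRP step easier to verify.
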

\begin{proof}
    Following Theorem \ref{thm:factor_independence_n_of_n}, we argue the only difference now manifests in its encrypted share ($c_{A}$ vs. $c_{B}$) and its public factor state ($\beta_{A}$ vs. $\beta_{B}$). Assuming 256 bit keys, Lemma \ref{lem:sss_indistinguishability} guarantees the shares $s_A$ and $s_B$ are random and unbiased when doing SSS over $\operatorname{GF}(2^q)$. Assuming independent factors, the share keys $k_A$ and $k_B$ are also independent. As the encryption scheme $E$ is a secure PRP, the resulting ciphertexts $c_{A} = E(k_A, s_A)$ and $c_{B} = E(k_B, s_B)$ are computationally indistinguishable from random strings.

    Similarly, any secret material within the factor states $\beta_A$ and $\beta_B$ is also protected by a secure PRP, rendering their state streams computationally indistinguishable. Since the adversary's entire view is indistinguishable, it cannot guess the challenge bit $b$ with a probability significantly greater than $1/2$, and its advantage is therefore negligible.
\end{proof}

\subsection{Overall System Security (MSI)}
Finally, we combine our previous results to prove that MFKDF2 satisfies the highest-level ESTMF property: Master Secret Indistinguishability.

\begin{theorem}
\label{thm:mfkdf2_achieves_msi}
    The MFKDF2 scheme achieves Master Secret Indistinguishability, assuming its underlying primitives (KDF, PRP, MAC) are secure and its factor constructions are factor-KI secure.
\end{theorem}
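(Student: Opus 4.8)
The goal is to show that the full public transcript $(\mathcal{B}_0,\mathcal{T}_0,\dots,\mathcal{B}_q,\mathcal{T}_q)$ that the adversary observes is computationally independent of the challenger's hidden bit $b$, so that $\mathcal{A}$'s two possible views coincide up to negligible statistical distance and $\text{Adv}_{\mathcal{A}}^{\text{MSI}}$ is negligible. I would prove this through a sequence of hybrid games running from the real MSI experiment $H_0$ (instantiated with $M_b$) to a terminal game $H^\ast$ whose entire transcript is sampled independently of $M_b$; in $H^\ast$ the advantage is identically zero, and bounding each adjacent gap by the advantage against a single underlying primitive yields the claim by the triangle inequality. Before the hybrids, I would use Proposition \ref{prop:state_integrity} to collapse the adversary's write capability: any state it submits that deviates from an authentic one fails the self-referential MAC and is answered with $\bot$ irrespective of $b$, so without loss of generality $\mathcal{A}$ is a passive observer of honestly generated streams and all leakage must flow through legitimate public values.

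I would then isolate the three channels carrying $M_b$ into the public state: the encrypted shares $\{c_i\}$, the factor states $\{\beta_{i,j}\}$ emitted by the feedback map $\texttt{Derive}_2(K,\cdot,\cdot)$, and the tags $\{\mathcal{T}_j\}=\{\mathit{MAC}(K,\mathcal{B}_j)\}$, noting that the latter two depend on $M_b$ only through the derived key $K=\mathit{KDF}(M_b,\mathit{salt}_K)$. The ordering of hybrids is what makes the argument go through. First I would rewrite the share ciphertexts: for every uncompromised factor the share key $k_i=\operatorname{HKDF}(\kappa_i)$ is unknown to $\mathcal{A}$, so Theorem \ref{thm:prp_share_encryption} lets me replace each $c_i$ by an encryption of a fixed dummy value with only negligible change, while for the strictly fewer than $t$ shares $\mathcal{A}$ can actually decrypt, Lemma \ref{lem:sss_indistinguishability} together with the information-theoretic security of Shamir sharing guarantees those share-values are distributed independently of $M_b$ and may be resampled without reference to it. After this step every share component is $M_b$-free. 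Only then would I swap $K=\mathit{KDF}(M_b,\mathit{salt}_K)$ for a uniformly random $\tilde K$: since the shares no longer encode $M_b$ and the factor source material is chosen independently of $M_b$, a KDF distinguisher can now simulate the complete transcript without knowing $M_b$, obtaining $K$ from its oracle, so indistinguishability follows from KDF security.

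With $K$ replaced by an independent $\tilde K$, the residual dependence can be closed using the results already established: Theorem \ref{thm:factor_independence_threshold} (and Theorem \ref{thm:factor_independence_n_of_n} in the $n$-of-$n$ case) certifies via Factor-IND-CMA that the factor-state streams carry no information distinguishing the challenge, the MAC tags are pseudorandom under the now-independent key $\tilde K$, and for any \texttt{Recover} queries $\mathcal{A}$ triggers, Theorem \ref{thm:share_updating} ensures the regenerated shares are independent of any previously compromised ones. At this point the transcript in $H^\ast$ is a function of $\tilde K$, the salts, the dummy ciphertexts, and the $M_b$-independent factor material alone, so $\mathcal{A}$'s advantage there is zero. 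Summing the PRP, KDF, MAC, and Factor-KI advantages accrued across the hybrids bounds $\text{Adv}_{\mathcal{A}}^{\text{MSI}}$ by a negligible quantity.

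The step I expect to be the main obstacle is justifying the hybrid ordering demanded by the self-referential MAC: because $K$ is derived from $M_b$ yet is simultaneously used to authenticate a state that itself encodes $M_b$ through the shares and to drive every factor update, the three channels are not a priori independent, and a naive attempt to swap $K$ first fails because the PRF reduction would be unable to produce the $M_b$-dependent ciphertexts without the hidden key. Sequencing the share-rewriting before the key-rewriting is precisely what breaks this dependency, and I must verify carefully that each reduction can faithfully simulate the entire interactive transcript---correct tags, consistent feedback, and correct $\bot$ responses to tampering---under either horn of its challenge. A secondary but essential subtlety is delimiting the adversarial model: MSI can hold only for an adversary whose compromised-factor set is insufficient to derive $K$ (fewer than $t$ decryptable shares, or at least one missing factor in the $n$-of-$n$ case), since an adversary able to recompute $K$ could trivially test the public MAC; this boundary condition must be stated explicitly for the Shamir information-theoretic step to apply.
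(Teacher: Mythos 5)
Your proposal is correct and follows essentially the same route as the paper's proof: a hybrid argument over the components of the public state stream, using Prop.~\ref{prop:state_integrity} to reduce the adversary to a passive observer, Lemma~\ref{lem:sss_indistinguishability} with Theorems~\ref{thm:prp_share_encryption} and \ref{thm:share_updating} to handle the encrypted-share stream, and the KDF-as-PRF plus factor-KI assumptions to handle the factor states and MAC tags. If anything, your treatment is more careful than the paper's: the explicit hybrid ordering (rewriting shares before swapping $K$, forced by the circularity of the self-referential MAC) and the explicit boundary condition that the adversary's compromised set must be insufficient to derive $K$ are both left implicit in the published proof.
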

\begin{proof}
    We prove this by showing that the adversary's view of the public state stream is computationally indistinguishable in the two worlds of the MSI game ($M_A$ vs. $M_B$). The proof proceeds via a standard hybrid argument over the components of the public state stream $((\mathcal{B}_1, \mathcal{T}_1), (\mathcal{B}_2, \mathcal{T}_2), \dots, (\mathcal{B}_q, \mathcal{T}_q))$. If each component of the state stream is indistinguishable, then the entire stream is indistinguishable.

    First, the integrity of all public parameters, including the policy and salts, is guaranteed by the self-referential MAC (Prop. \ref{prop:state_integrity}). These parameters are independent of the master secret and thus provide no advantage to the adversary.

    Second, the stream of encrypted shares $\{\{c_{i}\}_1, \{c_i\}_2, \dots, \{c_i\}_q\}$ is indistinguishable. The shares derived from the independent master secrets $M_A$ and $M_B$ are themselves independent (follows from Lemma \ref{lem:sss_indistinguishability}). As these shares are encrypted with a secure PRP (Thm. \ref{thm:prp_share_encryption}) and regenerated upon recovery (Thm. \ref{thm:share_updating}), the resulting stream of ciphertexts is computationally indistinguishable from random.

    Finally, the individual factor states $\{\beta_i\}$ and the MAC tag $T$ are also indistinguishable. The final keys $K_A$ and $K_B$ are derived from the master secrets via a KDF (modeled as a PRF), making them indistinguishable from random. By the factor-KI (Def. \ref{def:leak_free_factor}) assumption on the factor constructions and the PRF property of the MAC, the factor states and MAC tags generated using these keys are also computationally indistinguishable.

    Since all components of the public state stream are computationally indistinguishable whether they originate from master secret $M_A$ or $M_B$, the adversary's advantage in the MSI game is negligible.
\end{proof}

% TODO: Proving factor types themselves satisfy the Factor-KI property would be next to do, but holding off on it for now.

\clearpage
\newpage

\iffalse
\section{Ethical Considerations}

This study was conducted in compliance with the USENIX Security '26 Ethics Guidelines, including a stakeholder-based ethics analysis in consideration of the principles in The Menlo Report: ``Beneficence,'' ``Respect for Persons,'' ``Justice,'' and ``Respect for Law and Public Interest.''

This study does not involve the disclosure of severe security vulnerabilities beyond those already known and disclosed by SBH24.
However, it does contribute both a concrete protocol (MFKDF2) and a general framework (ESTMF) a that could both have a marked public benefit on end user security and privacy.
At present, MFKDF is the only stateful KDF of its kind that we are aware of, so the ESTMF analytical framework presented in this paper is unlikely to pose novel risks to other deployed cryptographic protocols.

Our decision to conduct and release this research was based on the notion that prior to this study, attacks against the NS23 cryptosystem were already publicly known (including via SBH24), but defenses were not publicly available.

No new research with human subjects was conducted in furtherance of this paper.

\section{Open Science}

The MFKDF2 JavaScript repository is a fork of the original MFKDF repository (mfkdf.com).
The anonymized MFKDF2 fork is available at \url{https://anonymous.4open.science/r/MFKDF2}.
\fi

\end{document}